\newcommand\blfootnote[1]{%
  \begingroup
  \renewcommand\thefootnote{}\footnote{#1}%
  \addtocounter{footnote}{-1}%
  \endgroup
}
\theoremstyle{plain}
\newtheorem{proposition.a}{Proposition A\ignorespaces}
\newtheorem{proposition.s}{Proposition S\ignorespaces}
\newtheorem{thm}{Theorem}
\newtheorem{thm.s}{Theorem S\ignorespaces}
\newtheorem{cor.s}{Corollary S\ignorespaces}
\newtheorem{lem}{Lemma}
\newtheorem{lem.s}{Lemma S\ignorespaces}
\theoremstyle{remark}
\newtheorem{remark}{Remark}
\newtheorem{remark.s}{Remark S\ignorespaces}
\newtheorem{condition}{Condition}
\newcommand{\A}{\bm{A}}
\newcommand{\w}{\bm{w}}
\newcommand{\y}{\bm{y}}
\newcommand{\X}{\bm{X}}
\newcommand{\h}{{h}}
\newcommand{\Z}{\bm{Z}}
\newcommand{\I}{\bm{I}}
\newcommand{\bmH}{\bm{H}}
\newcommand{\bme}{\bm{\ell}}
\newcommand{\bmeps}{\bm{\epsilon}}
\newcommand{\bmbeta}{\bm{\beta}}
\newcommand{\bmalpha}{\bm{\alpha}}
\newcommand{\var}{\mbox{Var}}
\newcommand{\tr}{\mbox{tr}}
\newcommand{\bbR}{\mathbb{R}}
\newcommand{\bmSigma}{\bm{\Sigma}}
\newcommand*\rel@kern[1]{\kern#1\dimexpr\macc@kerna}
\newcommand*\widebar[1]{%
  \begingroup
  \def\mathaccent##1##2{%
    \rel@kern{0.8}%
    \overline{\rel@kern{-0.8}\macc@nucleus\rel@kern{0.2}}%
    \rel@kern{-0.2}%
  }%
  \macc@depth\@ne
  \let\math@bgroup\@empty \let\math@egroup\macc@set@skewchar
  \mathsurround\z@ \frozen@everymath{\mathgroup\macc@group\relax}%
  \macc@set@skewchar\relax
  \let\mathaccentV\macc@nested@a
  \macc@nested@a\relax111{#1}%
  \endgroup
}
\def\bxzz{\color{black}}
\def\bxzzz{\color{black}}
\def\fxintercept{\color{black}}
\begin{document}

\begin{frontmatter}
\title{High-dimensional statistical inference for linkage disequilibrium score regression and its cross-ancestry extensions}
\runtitle{High-dimensional statistical inference for LDSC}

\begin{aug}
\author[A]{\fnms{Fei}~\snm{Xue}\ead[label=e1]{feixue@purdue.edu}}
\and
\author[B]{\fnms{Bingxin}~\snm{Zhao}\ead[label=e2]{bxzhao@upenn.edu}}

\address[A]{Department of Statistics,
Purdue University\printead[presep={,\ }]{e1}}

\address[B]{Department of Statistics and Data Science, University of Pennsylvania\printead[presep={,\ }]{e2}}
\end{aug}

\begin{abstract}
Linkage disequilibrium score regression (LDSC) has emerged as an essential tool for genetic and genomic analyses of complex traits, utilizing high-dimensional data derived from genome-wide association studies (GWAS). LDSC computes the linkage disequilibrium (LD) scores using an external reference panel, and integrates the LD scores with only summary data from the original GWAS. 
In this paper, we investigate LDSC within a fixed-effect data integration framework, underscoring its ability to merge multi-source GWAS data and reference panels. 
In particular, we take account of the  genome-wide dependence among the high-dimensional GWAS summary statistics, along with the block-diagonal dependence pattern in estimated LD scores. 
Our analysis uncovers several key factors of both the original GWAS and reference panel datasets that determine the performance of LDSC. 
We show that it is relatively feasible for LDSC-based estimators to achieve asymptotic normality when applied to genome-wide genetic variants (e.g., in genetic variance and covariance estimation), whereas it becomes considerably challenging when we focus on a much smaller subset of genetic variants (e.g., in partitioned heritability analysis).
Moreover, by modeling the disparities in LD patterns across different populations, we {show} that LDSC can be expanded to conduct cross-ancestry analyses using data from {genetically} distinct global populations. 
We validate our theoretical findings through extensive numerical evaluations using real genetic data from the UK Biobank study.
\end{abstract}

\begin{keyword}[class=MSC]
\kwd[Primary ]{62F12}
\kwd{62J05}
\kwd[; secondary ]{62P10}
\end{keyword}

\begin{keyword}
\kwd{Asymptotic normality}
\kwd{consistency}
\kwd{genetic covariance}
\kwd{genetic variance}
\kwd{GWAS summary data}
\kwd{LDSC} 
\kwd{UK Biobank} 
\end{keyword}

\end{frontmatter}

\section{Introduction}\label{sec1}
\blfootnote{The two authors contributed equally and are listed in alphabetical order.}

Genome-wide association studies (GWAS) are designed to explore the relationship between phenotypes and genotypes by assessing differences in the allele frequency of genetic variants across individuals with varying phenotypic observations. 
Estimating and testing genetic variance and covariance has been a crucial component of most GWAS studies \citep{uffelmann2021genome}.
Genetic variance, also known as heritability, measures the extent to which genetic factors influence phenotypes and quantifies the contribution of genetics relative to environmental factors \citep{zhu2020statistical}. Genetic covariance, on the other hand, assesses genetic similarity across complex traits, revealing insights into etiological genetic pathways and the shared genetic architecture \citep{zhang2021comparison, van2019genetic}.

Linkage disequilibrium score regression (LDSC) is a {method} that can be {used} to estimate both the genetic variance of a single trait \citep{bulik2015ld} (univariate LDSC) and the genetic covariance between a pair of traits \citep{bulik2015atlas} (bivariate LDSC). 
LDSC uses marginal summary statistics from GWAS \citep{pasaniuc2017dissecting} and {\bxzz estimated linkage disequilibrium (LD) scores} from publicly available genotype reference panels \citep{10002015global}, without the need for individual-level GWAS data.
The LDSC approach leverages the {\bxzz block-diagonal LD dependence pattern} across millions of genetic variants in the human genome. Variants in the same genomic region (or LD block) can be highly correlated, while those in different regions are typically independent \citep{berisa2016approximately} (Fig.~\ref{fig1}). 

LDSC has been empirically demonstrated to be capable of handling several critical challenges in genetics, such as the polygenic nature of complex traits \citep{boyle2017expanded, timpson2018genetic} (i.e., many small but nonzero genetic effects), sample overlap between complex trait studies \citep{zhao2022genetic}, and lack of individual-level GWAS data due to data privacy concerns \citep{pasaniuc2017dissecting}. As a result, LDSC has become one of the most widely used tools in this field, with thousands of citations across a wide range of complex traits as of {\bxzzz 2024}.
LDSC has also been extended to study the partitioned heritability with functional genomic data \citep{finucane2015partitioning,gazal2017linkage,finucane2018heritability}, estimate the polygenicity of complex traits \citep{o2019extreme}, {\bxzz assess non-additive interactions \citep{Smith2022.07.21.501001},}
and 
analyze genetic {dominance} effects \citep{palmer2023analysis}. 

\begin{figure}[t!]
\includegraphics[page=1,width=0.75\linewidth]{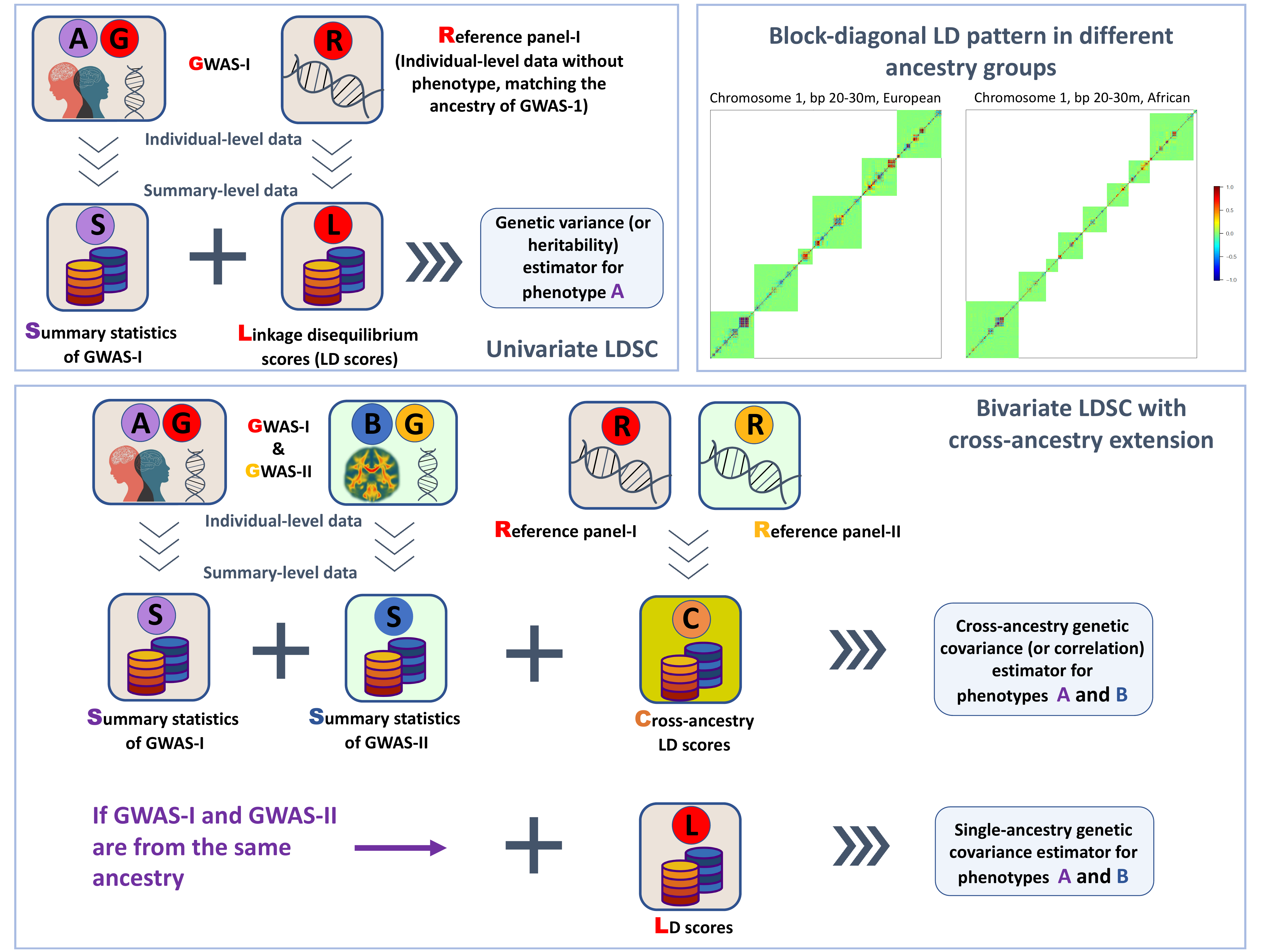}
  \caption{Illustration of LDSC estimators and {\bxzz block-diagonal} LD patterns 
  Univariate LDSC can estimate genetic variance (or heritability) using summary statistics from a GWAS of a particular phenotype A (e.g., depression), coupled with LD scores estimated from a reference panel. Bivariate LDSC can estimate the genetic covariance (or correlation) between phenotypes A and B (such as depression and a brain imaging trait), once again utilizing GWAS summary statistics and LD scores. In this paper, we investigate the theoretical properties of these two LDSC estimators and extend the bivariate LDSC to enable cross-ancestry applications.
}
\label{fig1}
\end{figure}

Despite the widespread use of LDSC-based estimators in genetic analyses, the understanding of their statistical properties remains limited.
Previous studies have explored the numerical performance of LDSC \citep{ni2018estimation,lee2018accuracy}, and it has been observed that LDSC may produce inconsistent or noisy estimates under specific circumstances \citep{luo2021estimating}.
Furthermore, the original LDSC papers \citep{bulik2015ld,bulik2015atlas} assume an infinitesimal {\bxzz random-effect} model, where all genetic variants have small direct contributions to the complex trait with independent and identically distributed (i.i.d.) random genetic effects. 

A recent study has broadened this assumption by examining LDSC under a misspecified random-effect model, in which a subset of genetic variants have i.i.d. random genetic effects \citep{jiang2023high}. However, it is well-known that complex traits may have complicated genetic architectures and that different variants may have varying degrees of genetic effect sizes \citep{o2021distribution}. 
For example, it is known that genetic variants collectively account for 33\% of total phenotypic variance in Alzheimer's disease. Of these, the {\it APOE} gene alone explains 6\%, other known risk genes explain 2\%, and the remaining 25\% can be jointly accounted for by all other genes, each contributing a small genetic effect \citep{ridge2013alzheimer}.
{Moreover, no previous studies have examined the impact of using a reference panel on the performance of LDSC.}
Therefore, it is crucial to better understand the LDSC-based estimators with a more realistic statistical modeling framework for the complex traits and LD scores. 

In this paper, we investigate the theoretical properties of LDSC within a fixed-effect data integration framework, emphasizing its competence in studying traits with complex genetic architecture and its capacity to assimilate multi-source GWAS data along with reference panels. 
Our first  major contribution lies in the explicit modeling of the genome-wide dependence across high-dimensional GWAS summary statistics, a crucial aspect that was largely overlooked in prior theoretical studies on GWAS summary data  \citep{zhao2020statistical,ye2021debiased}. 
Furthermore, we delineate the block-diagonal relatedness pattern within the estimated LD scores. 
Utilizing these results, we uncover several key factors within both the original GWAS and reference panel datasets that determine the performance of LDSC.

We show that LDSC-based estimators applied to a genome-wide scale (e.g., in genetic variance and covariance estimation \citep{bulik2015ld,bulik2015atlas}) can achieve consistency and asymptotic normality under flexible genetic architecture conditions, making them applicable to a broad range of complex traits and diseases.
However, LDSC may face additional challenges when it is used on a much smaller subset of genetic variants (e.g., in stratified heritability enrichment analysis \citep{finucane2015partitioning,gazal2017linkage,finucane2018heritability}).
Our analysis provides insights into the asymptotic behavior of LDSC-based estimators, offering valuable guidance to researchers in the field to make informed decisions when applying LDSC and interpreting results across various scenarios.

Moreover, {\bxzzz by} accounting for disparities in LD patterns across different populations, 
we formulate cross-ancestry LD scores and broaden the scope of LDSC methods to facilitate cross-ancestry genetic covariance analysis.
We also provide the theoretical properties of the cross-ancestry LDSC-based estimator.
The necessity for incorporating diverse data resources in genetic studies is steadily increasing \citep{zhou2022global}.
Cross-ancestry analysis can shed light on the generalizability and transferability of population-specific GWAS findings, hence enriching our understanding of gene-environment interactions and paving the way for enhanced genetic prediction models \citep{cai2021unified}.
The revelation that LDSC can be adapted for cross-ancestry analysis, along with the definition of cross-ancestry LD scores, could inspire the invention of many novel tools in genetics and genomics for cross-ancestry analyses.

This paper is structured as follows. 
In Section~\ref{sec2}, we introduce the datasets, models, definitions, and LDSC-based estimators. 
The theoretical results of univariate LDSC  are developed in Section~\ref{sec3}, while Section~\ref{sec4} studies bivariate LDSC with a cross-ancestry extension. 
Numerical results of extensive simulations using the UK Biobank \citep{bycroft2018uk} data are provided in Section~\ref{sec5}. 
Future topics for discussion are outlined in Section~\ref{sec6}. 
{Additional results, including technical comparisons and discussions with other existing methods and more real data analysis, as well as most}
of the technical details are provided in the Supplementary Material \citep{supplement}.

We introduce some {notations} that will be used frequently in the following {sections}.
We let $F(0,V)$ denote a generic distribution with mean $0$ and variance $V$,  
$N(0,V)$ denote a Gaussian distribution with mean $0$ and variance $V$,
and
$c,C$ represent some generic constant numbers.
We use $\tr(\cdot)$, $\|\cdot\|$, and $\|\cdot\|_k$ to denote the trace, $\ell^2$ norm, and $\ell^k$ norm of a matrix, respectively.
Let 
$H_n=o_p(h_n)$ denote $H_n/h_n\overset{p}{\to}0$,
$H_n=O_p(h_n)$ denote that $H_n/h_n$ is stochastically bounded,
$f_1(n) \lesssim f_2(n)$ denote $f_1(n)=O(f_2(n))$, 
 $f_1(n) \gtrsim f_2(n)$ denote $f_2(n)=O(f_1(n))$,
$f_1(n) \gg f_2(n)$ denote $f_2(n)/f_1(n)=o(1)$, 
$f_1(n) \ll f_2(n)$ denote $f_1(n)/f_2(n)=o(1)$,
and $f_1(n) \asymp f_2(n)$ denote that $f_1(n)$ and $f_2(n)$ have the same order of magnitude, that is, $f_1(n)=O(f_2(n))$ and $f_1(n)=O(f_2(n))$.
In addition, we let $\lambda_i(\A)$, $\lambda_{\max}(\A)$, and $\lambda_{\min}(\A)$  denote the $i$th, the largest, and the smallest eigenvalues of a matrix $\A$, respectively.

\subsection{Population descriptors}\label{sec1.1}
In this paper, we define population descriptors by following the practical guidance provided by the National Academies of Science \citep{committee2023using}. Most GWAS focus on complex traits and diseases and are performed on subjects who are genetically similar to one another, often without needing to consider subjects' real ancestry or race. 
To increase sample size and combine summary statistics from different studies, current GWAS practice usually involves grouping participants under a general label, such as European (EUR). 
For example, in LDSC, the LD scores are typically estimated from the 1000 Genomes (1KG) reference panel \citep{10002015global}, which classifies individuals into five general populations: African, Admixed American, East Asian (EAS), EUR, and South Asian. 
Therefore, in this paper, we characterize GWAS participants based on their genetic similarity to one another and to a general population label in the 1KG reference panel. 
Specifically, within-ancestry LDSC analysis refers to the GWAS summary data generated from subjects genetically similar to those in one of the five general populations in the 1KG reference panel, such as 1KG-EUR. Cross-ancestry LDSC analysis involves analyzing data from two distinct groups of subjects genetically similar to two different 1KG reference panel groups, such as 1KG-EUR and 1KG-EAS. 
Genetic similarity can be typically measured by coordinates in a low-dimensional representation of genetic data, such as genetic principal components \citep{price2006principal}.


\section{LDSC-based estimators}\label{sec2}
This section introduces the datasets, models, definitions of genetic terms, and LDSC-based estimators.

\subsection{Datasets and models}\label{sec2.1}
In this section, we describe the datasets used in our LDSC analysis.
LDSC requires two sets of input data: the first is the GWAS summary statistics, which typically include the marginal effect estimation and standard error of each genetic variant, as well as the sample size and minor allele frequency (MAF); the second set is the {\bxzz estimated} LD scores, which capture the {\bxzz block-diagonal} LD patterns among genetic variants across the genome.
The GWAS summary statistics and LD scores are typically generated from different data resources and may require careful quality control (QC) to ensure accurate and reliable results. For more information about these QC steps, please refer to \url{https://github.com/bulik/ldsc/}. 
In the models presented in the following sections, we assume that the data preprocessing steps have already been performed.

{\bxzz First, we consider two independent GWAS cohorts that generate the GWAS summary statistics for two different complex traits, respectively referred to as GWAS-I and GWAS-II:} 
\begin{itemize}
\item GWAS-I: $(\X_{\alpha},\y_{\alpha})$, with $\X_{\alpha}=(\X_{\alpha,1},\ldots,\X_{\alpha,p}) \in \bbR^{n_{\alpha}\times p}$ and $\y_\alpha \in \bbR^{n_\alpha \times 1}$; 
\item GWAS-II: $(\X_{\beta},\y_{\beta})$, with $\X_{\beta}=(\X_{\beta,1},\ldots,\X_{\beta,p}) \in \bbR^{n_{\beta}\times p}$ and $\y_{\beta} \in \bbR^{n_{\beta} \times 1}$. 
\end{itemize}
Here $\X_{\alpha}$ and $\X_{\beta}$ represent samples of $p$ genetic variants in the two GWAS. 
$\y_{\alpha}$ and $\y_{\beta}$ are two continuous complex traits studied in the two GWAS, with sample sizes $n_\alpha$ and $n_\beta$, respectively. 
Linear additive polygenic models of complex traits and the $p$ genetic variants are assumed 
\begin{flalign}
\y_{\alpha}= \X_{\alpha}\bmalpha+\bmeps_{\alpha}  \quad \text{and}  \quad
\y_{\beta}=\X_{\beta}\bmbeta+\bmeps_{\beta},  
\label{equ2.1}
\end{flalign}
where  $\bmalpha=(\alpha_1,\ldots, \alpha_{p})^{T}$ and $\bmbeta=(\beta_1,\ldots, \beta_{p})^{T}$
are unknown fixed genetic effects and $\bmeps_{\alpha}=(\epsilon_{\alpha,1},\ldots,\epsilon_{\alpha,{n_{\alpha}}})^T$ and $\bmeps_{\beta}=(\epsilon_{\beta,1},\ldots,\epsilon_{\beta,{n_{\beta}}})^T$ represent random error vectors.
{For random errors, we assume 
$\epsilon_{\alpha,i} \stackrel{i.i.d.}{\sim} N(0,\sigma^2_{\epsilon_\alpha})$ and
$\epsilon_{\beta,j} \stackrel{i.i.d.}{\sim} N(0,\sigma^2_{\epsilon_{\beta}})$, for $i=1,...,n_\alpha$ and  $j=1,...,n_{\beta}$. 

To enable cross-ancestry extensions, our main analysis assumes that the two GWAS datasets are from two different populations and thus are non-overlapping and independent. 
For the special case where the two GWAS datasets originate from the same population, our theoretical results can accommodate both partially and fully overlapping sample scenarios, as discussed in Section~\ref{sec2.3}. 
We also assume that both GWAS have the same number of genetic variants $p$, most of which are single nucleotide polymorphisms (SNPs). In practice, LDSC uses about one million genetic variants in a standard {\bxzz genome-wide} analysis, which are the variants on the HapMap3 reference panel \citep{international2010integrating}, after removing the variants in the major histocompatibility complex region. 

Next, we introduce two reference panel datasets to {\bxzz estimate} LD scores for the $p$ genetic variants:
\begin{itemize}
\item Reference panel-I: $\Z_{\alpha}=(\Z_{\alpha,1},\ldots,\Z_{\alpha,p}) \in \bbR^{n_{r\alpha}\times p}$;
\item Reference panel-II: $\Z_{\beta}=(\Z_{\beta,1},\ldots,\Z_{\beta,p}) \in \bbR^{n_{r\beta}\times p}$.
\end{itemize}
Here $\Z_{\alpha}$ and $\Z_{\beta}$ represent $n_{r\alpha}$ and $n_{r\beta}$ samples of the $p$ genetic variants in the two reference panels,
which are mutually independent and also independent of GWAS datasets $(\X_{\alpha},\y_{\alpha})$ and $(\X_{\beta},\y_{\beta})$.
The reference panel-I and reference panel-II, {typically consisting of subjects from the 1KG project \citep{10002015global}, are genetically similar to} the populations represented in GWAS-I and GWAS-II, respectively.
We require the following condition for {\bxzz GWAS and reference panel} datasets. 
\begin{condition}
\label{con1}
\begin{enumerate}[(a).]
\label{con_normal_correlation}
\item We assume $\X_{\alpha}={\X_{\alpha_0}}\bmSigma_{{\alpha}}^{1/2}$, $\X_{\beta}={\X_{\beta_0}}\bmSigma_{{\beta}}^{1/2}$,
$\Z_{\alpha}={\Z_{\alpha_0}}\bmSigma_{{\alpha}}^{1/2}$, and $\Z_{\beta}={\Z_{\beta_0}}\bmSigma_{{\beta}}^{1/2}$. 
{Entries of $\X_{\alpha_0}$,  $\X_{\beta_0}$, $\Z_{\alpha_0}$ and  $\Z_{\beta_0}$ are i.i.d. Gaussian random variables with mean zero and variance one.}
The $\bmSigma_{{\alpha}}$ and $\bmSigma_{{\beta}}$ are $p\times p$ population level deterministic positive definite matrices with uniformly bounded eigenvalues. 
Specifically, we have 
$c\le \lambda_{\min}(\bmSigma_{{\alpha}}) \le \lambda_{\max}(\bmSigma_{{\alpha}})\le C$ for some positive constants $c,C$.  
The $\bmSigma_{{\beta}}$ satisfies similar conditions. 
For simplicity, we assume $\bmSigma_{\alpha_{ii}}=\bmSigma_{\beta_{ii}}=1$, $i=1,\ldots,p$.
\item 
\label{con_block}
\label{con_constant}
There exists a known $p\times p$ block-diagonal matrix structure $\mathcal{A}$ with $p_b$ diagonal blocks such that, if the $(i,j)$ element is outside {\bxzz these} $p_b$ diagonal blocks, then $\bmSigma_{{\alpha}, ij}$ and $\bmSigma_{{\beta}, ij}$ are {\bxzz zero}. The sizes of all the $p_b$ diagonal blocks are bounded by $q_b=O(1)$. 
\item\label{con_h}
We assume $\sigma_{\epsilon_{\alpha}}^2\asymp \|\bmalpha\|^2$ and $\sigma_{\epsilon_{\beta}}^2\asymp \|\bmbeta\|^2$.
\item We assume $\min(n_{\alpha},n_{\beta}, n_{r\alpha},n_{r\beta},p)\to \infty$.
\end{enumerate}
\end{condition}
In Condition~\ref{con1}~(a), $\bmSigma_{\alpha}$ and $\bmSigma_{\beta}$ are allowed to be different, which enables us to extend LDSC to cross-ancestry analysis. 
The special case $\bmSigma_{\alpha}=\bmSigma_{\beta}$ corresponds to within-ancestry applications of LDSC. 
Similar to previous high-dimensional GWAS analyses \citep{wang2022estimation,verzelen2018adaptive}, the {\bxzz normality} assumption of the datasets is used to derive our theoretical results. 
{\bxzz Our simulation results, based on real genotype data, provide evidence that this assumption of normality can be relaxed in GWAS.}

Furthermore, Condition~\ref{con1}~(b) explicitly models $\bmSigma_{\alpha}$ and $\bmSigma_{\beta}$ to be a block-diagonal matrices with {$p_b$} blocks. 
It is well-known that the LD relatedness pattern of genetic variants has a block-diagonal structure \citep{berisa2016approximately,zhao2022block}.
In LDSC algorithms, LD blocks are estimated using window sizes of $1$ centimorgan (cM) \citep{bulik2015ld}. These LD blocks typically consist of hundreds to thousands of genetic variants.
Under Condition~\ref{con1}~(b), using the block-diagonal structure $\mathcal{A}$ and within-block sample covariance estimators, we have $p\times p$ block-diagonal sample covariance matrices with $p_b$ diagonal blocks, denoted as  $\widehat{\bmSigma}_{\alpha}$ and $\widehat{\bmSigma}_{\beta}$, respectively.
In particular, if the $(i,j)$ element $\widehat{\bmSigma}_{\alpha,{ij}}$ of $\widehat{\bmSigma}_{\alpha}$ is in one of the diagonal blocks of $\mathcal{A}$, then $\widehat{\bmSigma}_{\alpha,{ij}}$ is defined as the sample covariance estimator for the $i$th and $j$th genetic variants in reference panel $\Z_{\alpha}$; otherwise, $\widehat{\bmSigma}_{\alpha,{ij}}=0$. The same definition is applied to $\widehat{\bmSigma}_{\beta}$.
By the normality assumption in Condition~\ref{con1}~(a), elements in different diagonal blocks of $\widehat{\bmSigma}_{\alpha}$ (or $\widehat{\bmSigma}_{\beta}$) are independent.

Under Condition~\ref{con1}~(c), both the genetic  (for example, $\X_{\alpha}\bmalpha$) and non-genetic (for example, $\bmeps_{\alpha}$) components have considerable contributions to the phenotype, which reflects the reality that most complex traits are influenced by both genetic and environmental factors. It is worth mentioning that there are no restrictions on the sparsity or similarity of genetic effects.
In addition, Condition~\ref{con1}~(d) assumes that all sample sizes of datasets $n_{\alpha},n_{\beta}, n_{r\alpha},n_{r\beta}$, as well as the number of genetic variants $p$, tend to infinity. We allow for these parameters to have flexible rates, which will be discussed in subsequent sections.

\subsection{Definitions of genetic terms}\label{sec2.2}
In this section, we provide the definitions and notations for the genetic variance and covariance parameters, GWAS summary data, and LD scores.
{Following model~(\ref{equ2.1}) and Condition~\ref{con1}, the} 
genetic variances for $\y_{\alpha}$ and $\y_{\beta}$ \citep{bulik2015ld} are defined as 
\begin{flalign*}
g^2_{\alpha}=\bmalpha^T\bmalpha=\sum_{i=1}^{p}\alpha_i^2 \quad \mbox{and} \quad
g^2_{\beta}=\bmbeta^T\bmbeta=\sum_{i=1}^{p}\beta_i^2,
\end{flalign*}
which measure the aggregated genetic contributions of the $p$ genetic variants to $\y_{\alpha}$ and $\y_{\beta}$, respectively. The genetic variances are closely related to the SNP-based heritability \citep{yang2017concepts}.
For example, the SNP-based heritability of $\y_{\alpha}$ can be defined by $\h^2_{\alpha}=\bmalpha^T\bmalpha/(\bmalpha^T\bmalpha+\sigma^2_{\epsilon_\alpha})
\in (0,1)$, which is the proportion of phenotypic variance explained by genetic variance. 
For normalized phenotype with
$\var(\y_{\alpha})=1$,  we have $\h^2_{\alpha}=g^2_{\alpha}$.
The genetic covariance between $\y_{\alpha}$ and $\y_{\beta}$ \citep{bulik2015atlas} can be defined as
\begin{flalign*}
g_{\alpha\beta}=\bmalpha^T\bmbeta=\sum_{i=1}^{p}\alpha_i\beta_i,
\end{flalign*}
which quantifies the shared genetic effects between the two traits \citep{guo2019optimal,lu2017powerful,zhao2022genetic}. 
Furthermore, we define the per-variant contribution to genetic variances and covariance as 
\begin{flalign*}
\sigma_{\alpha}^2=g^2_{\alpha}/p, \quad \sigma_{\beta}^2=g^2_{\beta}/p, \quad \mbox{and} \quad
\sigma_{\alpha\beta}=g_{\alpha\beta}/p, 
\end{flalign*}
{which are target parameters in LDSC.}
As demonstrated in subsequent sections, LDSC provides separate  estimators for $\sigma_{\alpha}^2$, $\sigma_{\beta}^2$, and $\sigma_{\alpha\beta}$. Since $p$ is known in practice, estimators for $g^2_{\alpha}$, $g^2_{\beta}$, and $g_{\alpha\beta}$ can be obtained automatically.

Next, we define {\bxzz GWAS summary data as genetic effect  estimators from marginal screening \citep{uffelmann2021genome,pasaniuc2017dissecting}}
\begin{flalign*}
\widehat{a}_j=n_\alpha^{-1}\cdot \X_{\alpha, j}^T\y_\alpha \quad\mbox{and}\quad 
\widehat{b}_j=n_\beta^{-1}\cdot  \X_{\beta, j}^T\y_\beta, \quad j=1,\ldots,p.
\end{flalign*}
As $\min(n_\alpha,n_\beta)\to \infty$, we have 
\begin{flalign*}
\widehat{a}_j=\bmSigma_{\alpha,j}\bmalpha+o_p(1) \quad\mbox{and}\quad \widehat{b}_j=\bmSigma_{\beta,j}\bmbeta+o_p(1), \quad j=1,\ldots,p,
\end{flalign*}
where $\bmSigma_{\alpha,j}$ is a $1\times p$ vector corresponding to the $j$th row of $\bmSigma_{\alpha}$ and $\bmSigma_{\beta,j}$ is the $j$th row of $\bmSigma_{\beta}$. 
Let $a_j=\bmSigma_{\alpha,j}\bmalpha$ and $b_j=\bmSigma_{\beta,j}\bmbeta$ represent marginal genetic effects of the $j$th variant on $\y_{\alpha}$ and $\y_{\beta}$, respectively. Therefore, GWAS summary data are consistent estimators of these marginal genetic effects. 

Based on GWAS summary data, we introduce the following three vectors, which serve as key inputs for LDSC:
$\widehat{\bm{w}}_{a}=(\widehat{a}_1^2,\ldots,\widehat{a}_j^2,\ldots,\widehat{a}_p^2)^T$, $\widehat{\bm{w}}_{b}=(\widehat{b}_1^2,\ldots,\widehat{b}_j^2,\ldots,\widehat{b}_p^2)^T$, and $\widehat{\bm{w}}_{ab}=(\widehat{a}_1\widehat{b}_1,\ldots,\widehat{a}_j\widehat{b}_j,\ldots,\widehat{a}_p\widehat{b}_p)^T.$
To examine their statistical properties, we further define their expectations as $$\w_{a}=E(\widehat{\w}_{a})=(a_1^2, \ldots, a_p^2)^T+\{\var(\widehat{a}_1), \ldots, \var(\widehat{a}_p)\}^T,$$
$$\w_{b}=E(\widehat{\w}_{b})=(b_1^2, \ldots, b_p^2)^T+\{\var(\widehat{b}_1), \ldots, \var(\widehat{b}_p)\}^T, \quad\mbox{and} $$ 
$$\w_{ab}=E(\widehat{\w}_{ab})=(a_1b_1,\ldots,a_pb_p)^T.$$

Moreover,  we define the LD scores as 
$\bme_{a}=(\ell_{a,1},\ldots,\ell_{a,p})^T,
\bme_{b}=(\ell_{b,1},\ldots,\ell_{b,p})^T,$ and $
\bme_{ab}=(\ell_{ab,1},\ldots,\ell_{ab,p})^T$, where 
$$\ell_{a,j}=\sum_{i=1}^{p}\bmSigma_{\alpha,{ji}}^2, \quad
\ell_{b,j}=\sum_{i=1}^{p}\bmSigma_{\beta,{ji}}^2, \quad \mbox{and}\quad
\ell_{ab,j}=\sum_{i=1}^{p}\bmSigma_{\alpha,{ji}}\bmSigma_{\beta,{ji}}, \quad j=1,\ldots,p.$$
Here $\bmSigma_{\alpha,{ji}}$ and $\bmSigma_{\beta,{ji}}$ are the $(j,i)$ elements of $\bmSigma_{\alpha}$ and  $\bmSigma_{\beta}$, respectively.
The LD scores of the $j$th genetic variant quantify the aggregated relevance with other variants, either within one ancestry (for example, $\ell_{a,j}$) or between two ancestries (that is, $\ell_{ab,j}$). 
We denote $\widehat{\bme}_{a}=(\widehat{\ell}_{a, 1},\ldots,\widehat{\ell}_{a, p})^T$, 
$\widehat{\bme}_{b}=(\widehat{\ell}_{b, 1},\ldots,\widehat{\ell}_{b, p})^T$, and 
$\widehat{\bme}_{ab}=(\widehat{\ell}_{ab, 1},\ldots,\widehat{\ell}_{ab, p})^T$ to be estimators of $\bme_{a}$,
$\bme_{b}$, and $\bme_{ab}$, respectively. In practice,
they are estimated from the block-diagonal sample covariance matrices $\widehat{\bmSigma}_{\alpha}$ and $\widehat{\bmSigma}_{\beta}$. Specifically, let $\mathcal{N}(k)$ be the index set for genetic variants  in the $k$th diagonal block of $\mathcal{A}$, $k=1, \ldots, p_b$.
For $j\in \mathcal{N}(k)$, we have 
$$\widehat{\ell}_{a, j}=\sum_{i\in \mathcal{N}(k)}\widehat{\bmSigma}_{\alpha,{ji}}^2, \quad
\widehat{\ell}_{b, j}=\sum_{i\in \mathcal{N}(k)}\widehat{\bmSigma}_{\beta,{ji}}^2, \quad \mbox{and}\quad
\widehat{\ell}_{ab, j}=\sum_{i\in \mathcal{N}(k)}\widehat{\bmSigma}_{\alpha,{ji}} \widehat{\bmSigma}_{\beta,{ji}},$$ 
where $\widehat{\bmSigma}_{\alpha,{ji}}$ and $\widehat{\bmSigma}_{\beta,{ji}}$ are the $(j,i)$ elements of  $\widehat{\bmSigma}_{\alpha}$ and $\widehat{\bmSigma}_{\beta}$, respectively.
{Under the normality assumption and the block-diagonal covariance matrix structure in Condition~\ref{con1}, the $\widehat{\ell}_{a, j}$'s corresponding to genetic variants in different $\mathcal{N}(k)$'s are independent according to Lemma S1 in the Supplementary Material. 
Similar results hold for $\widehat{\ell}_{b, j}$ and 
$\widehat{\ell}_{ab, j}$ as well.} 
The following lemmas provide the statistical properties of the estimated LD scores.

\begin{lem}\label{lTl_normality_a}
Under Condition \ref{con1}, 
we have
\begin{align*}
\frac{\widehat{\bme}_{a}^T\widehat{\bme}_{a} - E(\widehat{\bme}_{a}^T\widehat{\bme}_{a})}{\rho_{l_a}}\overset{d}{\to} N(0,1)
\end{align*}
as $\min(n_{r\alpha},p)\to \infty$,
where {\bxzz $\rho_{l_a} = \{\var(\widehat{\bme}_{a}^T\widehat{\bme}_{a})\}^{1/2}\asymp (p/n_{r\alpha})^{1/2}$}. Moreover, we have 
\begin{align*}
\frac{E(\widehat{\bme}_{a}^T\widehat{\bme}_{a})
- (E\widehat{\bme}_{a})^T(E\widehat{\bme}_{a})}{p} = O(n_{r\alpha}^{-1}) \quad \mbox{and} \quad
\frac{(E\widehat{\bme}_{a})^T(E\widehat{\bme}_{a}) - \bme_{a}^T\bme_{a}}{p} = O(n_{r\alpha}^{-1}).
\end{align*}
\end{lem}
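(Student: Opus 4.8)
The plan is to establish the three displayed asymptotic statements separately, working from the innermost deterministic approximation outward to the central limit theorem. First I would handle the two bias-type estimates, which are purely second-moment computations. Writing $\widehat{\ell}_{a,j} = \sum_{i \in \mathcal{N}(k)} \widehat{\bmSigma}_{\alpha,ji}^2$ for $j \in \mathcal{N}(k)$, and recalling that under Condition~\ref{con1} the reference panel columns within a block are jointly Gaussian with covariance $\bmSigma_{\alpha}$ restricted to that block, each $\widehat{\bmSigma}_{\alpha,ji}$ is an $n_{r\alpha}$-sample covariance estimator. A standard computation gives $E(\widehat{\bmSigma}_{\alpha,ji}^2) = \bmSigma_{\alpha,ji}^2 + n_{r\alpha}^{-1}(\bmSigma_{\alpha,ii}\bmSigma_{\alpha,jj} + \bmSigma_{\alpha,ji}^2) + O(n_{r\alpha}^{-2})$, so that $E(\widehat{\ell}_{a,j}) = \ell_{a,j} + O(n_{r\alpha}^{-1})$ uniformly in $j$ because the block sizes are bounded by $q_b = O(1)$ and the entries of $\bmSigma_{\alpha}$ are bounded. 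Summing the squares over $j$ and using $\|\bme_a\|^2 \asymp p$, this yields $\{(E\widehat{\bme}_a)^T(E\widehat{\bme}_a) - \bme_a^T\bme_a\}/p = O(n_{r\alpha}^{-1})$. For the first bias term, I would expand $E(\widehat{\bme}_a^T\widehat{\bme}_a) = \sum_j E(\widehat{\ell}_{a,j}^2) = \sum_j [\var(\widehat{\ell}_{a,j}) + (E\widehat{\ell}_{a,j})^2]$; since each $\widehat{\ell}_{a,j}$ is a bounded-dimensional quadratic form in Gaussians, $\var(\widehat{\ell}_{a,j}) = O(n_{r\alpha}^{-1})$ uniformly, and summing over the $p$ coordinates gives the stated $O(n_{r\alpha}^{-1})$ after dividing by $p$.

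Next I would address the variance order $\rho_{l_a}^2 = \var(\widehat{\bme}_a^T\widehat{\bme}_a) \asymp p/n_{r\alpha}$. The key structural fact, already quoted from Lemma~S1, is that the $\widehat{\ell}_{a,j}$ for $j$ in distinct blocks $\mathcal{N}(k)$ are independent, so $\var(\widehat{\bme}_a^T\widehat{\bme}_a) = \sum_{k=1}^{p_b} \var\big(\sum_{j \in \mathcal{N}(k)} \widehat{\ell}_{a,j}^2\big)$, a sum of $p_b \asymp p$ independent block contributions. Each block contribution is the variance of a fixed-dimensional polynomial (degree four) in a fixed-dimensional Gaussian vector scaled like $n_{r\alpha}^{-1/2}$ fluctuations, hence of order $n_{r\alpha}^{-1}$; I also need the matching lower bound, which follows because the leading fluctuation of $\widehat{\ell}_{a,j}^2$ around its mean is $2\ell_{a,j}(\widehat{\ell}_{a,j} - \ell_{a,j})$ and $\ell_{a,j} \ge \bmSigma_{\alpha,jj}^2 = 1$ is bounded away from zero, so no cancellation collapses the order. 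Summing gives $\rho_{l_a}^2 \asymp p/n_{r\alpha}$.

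For the asymptotic normality I would invoke a central limit theorem for sums of independent (non-identically distributed) random variables, applied to $\widehat{\bme}_a^T\widehat{\bme}_a - E(\widehat{\bme}_a^T\widehat{\bme}_a) = \sum_{k=1}^{p_b} \big(T_k - E T_k\big)$ where $T_k = \sum_{j \in \mathcal{N}(k)} \widehat{\ell}_{a,j}^2$. Since the $T_k$ are independent with uniformly bounded block sizes, each $T_k - ET_k$ has variance $\asymp n_{r\alpha}^{-1}$ and, being a bounded-degree polynomial in Gaussians, has all moments controlled; a Lyapunov condition with exponent, say, $4$ then reduces to checking $\sum_k E|T_k - ET_k|^4 / \rho_{l_a}^4 = O(p \cdot n_{r\alpha}^{-2}) / (p/n_{r\alpha})^2 = O(1/p) \to 0$, using $\min(n_{r\alpha},p) \to \infty$. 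This delivers $\{\widehat{\bme}_a^T\widehat{\bme}_a - E(\widehat{\bme}_a^T\widehat{\bme}_a)\}/\rho_{l_a} \overset{d}{\to} N(0,1)$.

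The main obstacle I anticipate is the careful bookkeeping of the fourth-moment (and variance-lower-bound) estimates for the block statistics $T_k$ uniformly over all blocks: one must verify that the Gaussian polynomial structure gives constants depending only on $q_b$ and the eigenvalue bounds $c, C$ of $\bmSigma_\alpha$, not on $k$ or on $n_{r\alpha}$, and in particular that the variance does not degenerate. The independence-across-blocks reduction (Lemma~S1) is what makes everything else routine, so the real work is confined to a single bounded-dimensional Gaussian computation done with explicit uniform control; once that is in hand, the CLT and the bias bounds follow by summation.
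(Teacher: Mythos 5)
Your proposal is correct and follows essentially the same route as the paper's argument: exploit the across-block independence of the $\widehat{\ell}_{a,j}$'s (Lemma~S1) to write $\widehat{\bme}_a^T\widehat{\bme}_a$ as a sum of $p_b\asymp p$ independent block statistics, obtain the bias terms and the order $\rho_{l_a}^2\asymp p/n_{r\alpha}$ from standard Gaussian sample-covariance moment computations with constants uniform in the block (using $q_b=O(1)$ and the eigenvalue bounds), and conclude by a Lyapunov/Lindeberg CLT for independent, non-identically distributed block sums. The only step you rightly flag as needing explicit care is the uniform lower bound on the block variances, which follows as you indicate from the leading linear fluctuation term having coefficients bounded away from zero (since $\ell_{a,j}\ge \bmSigma_{\alpha,jj}^2=1$) together with the nondegeneracy of the within-block sample-covariance fluctuations guaranteed by Condition~\ref{con1}(a).
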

\begin{lem}\label{lTl_normality}
Under Condition \ref{con1}, 
we have
\begin{align*}
\frac{\widehat{\bme}_{ab}^T\widehat{\bme}_{ab} - E(\widehat{\bme}_{ab}^T\widehat{\bme}_{ab})}{\rho_{l_{ab}}}\overset{d}{\to} N(0,1)
\end{align*}
{as $\min(n_{r\alpha},n_{r\beta},p)\to \infty$},
where {\bxzz$\rho_{l_{ab}} = \{\var(\widehat{\bme}_{ab}^T\widehat{\bme}_{ab})\}^{1/2}\asymp (p/n_{r\alpha} + p/n_{r\beta})^{1/2}$.} Moreover, we have 
\begin{align*}
\frac{E(\widehat{\bme}_{ab}^T\widehat{\bme}_{ab})
- \bme_{ab}^T\bme_{ab}}{p}=O(n_{r\alpha}^{-1}+n_{r\beta}^{-1}).
\end{align*}
\end{lem}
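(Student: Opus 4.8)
The plan is to reduce $\widehat{\bme}_{ab}^T\widehat{\bme}_{ab}$ to a sum of independent, bounded-complexity terms and then apply a Lyapunov central limit theorem, mirroring the argument behind Lemma~\ref{lTl_normality_a} but now tracking two independent reference panels. Since $\bmSigma_\alpha$ and $\bmSigma_\beta$ are block-diagonal with respect to the known partition $\mathcal{A}$ (Condition~\ref{con1}~(b)), so are $\bmSigma_\alpha^{1/2}$ and $\bmSigma_\beta^{1/2}$, and therefore the columns of $\Z_\alpha$ indexed by distinct blocks $\mathcal{N}(k)$ are independent Gaussian vectors, as are those of $\Z_\beta$, with $\Z_\alpha$ independent of $\Z_\beta$. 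Writing $T_k=\sum_{j\in\mathcal{N}(k)}\widehat{\ell}_{ab,j}^2$, which depends only on $\{\Z_{\alpha,i},\Z_{\beta,i}:i\in\mathcal{N}(k)\}$, we get $\widehat{\bme}_{ab}^T\widehat{\bme}_{ab}=\sum_{k=1}^{p_b}T_k$ with $T_1,\dots,T_{p_b}$ mutually independent (this is Lemma~S1). Each $\widehat{\ell}_{ab,j}=\sum_{i\in\mathcal{N}(k)}\widehat{\bmSigma}_{\alpha,ji}\widehat{\bmSigma}_{\beta,ji}$ is a polynomial of bounded degree in at most $q_b=O(1)$ within-block sample covariances from each panel.

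I would then establish two moment estimates from the expansion $\widehat{\bmSigma}_{\alpha,ji}=\bmSigma_{\alpha,ji}+\delta_{\alpha,ji}$, where the fluctuation $\delta_{\alpha,ji}$ has all moments of order $n_{r\alpha}^{-1/2}$ (and similarly for the $\beta$ panel at rate $n_{r\beta}^{-1/2}$), valid under Condition~\ref{con1}~(a) once $\min(n_{r\alpha},n_{r\beta})\to\infty$. First, because $T_k-ET_k$ is, after this substitution, a bounded-degree polynomial in finitely many of the $\delta$'s with no constant term, $E|T_k-ET_k|^3\le C(n_{r\alpha}^{-1}+n_{r\beta}^{-1})^{3/2}$ uniformly in $k$. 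Second, $\sigma_n^2:=\var(\widehat{\bme}_{ab}^T\widehat{\bme}_{ab})=\sum_{k}\var(T_k)$ with $\var(T_k)\asymp n_{r\alpha}^{-1}+n_{r\beta}^{-1}$ for every $k$; the upper bound is immediate from the expansion, while for the lower bound one conditions separately on $\Z_\alpha$ and on $\Z_\beta$, using independence of the two panels to get $\var(T_k)\ge\var\{E(T_k\mid\Z_\alpha)\}+\var\{E(T_k\mid\Z_\beta)\}$, and then checks that the diagonal contributions ($i=j$) to these conditional means produce non-degenerate fluctuations of orders $n_{r\alpha}^{-1}$ and $n_{r\beta}^{-1}$ respectively, which cannot be cancelled by the remaining terms (here one uses that the relevant Gaussian fourth-moment covariances are non-negative). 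Summing over the $p_b\asymp p$ blocks gives $\rho_{l_{ab}}=\sigma_n\asymp(p/n_{r\alpha}+p/n_{r\beta})^{1/2}$.

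The central limit theorem then follows by the Lyapunov CLT for the independent array $\{(T_k-ET_k)/\sigma_n\}_{k=1}^{p_b}$: the Lyapunov ratio is $\sigma_n^{-3}\sum_{k=1}^{p_b}E|T_k-ET_k|^3\lesssim p_b(n_{r\alpha}^{-1}+n_{r\beta}^{-1})^{3/2}\big/\{p(n_{r\alpha}^{-1}+n_{r\beta}^{-1})\}^{3/2}\asymp p^{-1/2}\to0$, which together with $\min(n_{r\alpha},n_{r\beta})\to\infty$ yields the claimed convergence as $\min(n_{r\alpha},n_{r\beta},p)\to\infty$. For the bias statement, note that independence of the two panels and the (essential) unbiasedness of the within-block sample covariances give $E\widehat{\ell}_{ab,j}=\sum_{i\in\mathcal{N}(k)}E(\widehat{\bmSigma}_{\alpha,ji})\,E(\widehat{\bmSigma}_{\beta,ji})=\ell_{ab,j}$ up to an $O(n_{r\alpha}^{-1}+n_{r\beta}^{-1})$ term, so $E\widehat{\ell}_{ab,j}^2=\ell_{ab,j}^2+\var(\widehat{\ell}_{ab,j})+O(n_{r\alpha}^{-1}+n_{r\beta}^{-1})$; summing over $j=1,\dots,p$ and using $\var(\widehat{\ell}_{ab,j})=O(n_{r\alpha}^{-1}+n_{r\beta}^{-1})$ uniformly in $j$, then dividing by $p$, gives $\{E(\widehat{\bme}_{ab}^T\widehat{\bme}_{ab})-\bme_{ab}^T\bme_{ab}\}/p=O(n_{r\alpha}^{-1}+n_{r\beta}^{-1})$. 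In contrast to Lemma~\ref{lTl_normality_a}, there is no intermediate bias between $(E\widehat{\bme}_{ab})^T(E\widehat{\bme}_{ab})$ and $\bme_{ab}^T\bme_{ab}$ precisely because the cross term $E(\widehat{\bmSigma}_{\alpha,ji}\widehat{\bmSigma}_{\beta,ji})$ factorizes by independence. The main obstacle is the variance lower bound in the second step: one must verify that the order-$n_{r\alpha}^{-1}$ and order-$n_{r\beta}^{-1}$ fluctuations of $T_k$ coming from the two independent panels genuinely survive without cancellation, since everything else is routine, if lengthy, polynomial-moment bookkeeping.
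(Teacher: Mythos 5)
Your route is essentially the paper's: decompose $\widehat{\bme}_{ab}^T\widehat{\bme}_{ab}$ into the independent block sums $T_k$ (Lemma S1, bounded block sizes), control moments of the Wishart-type fluctuations, apply the Lyapunov CLT over the $p_b\asymp p$ blocks, and get the bias bound from $E(\widehat{\bmSigma}_{\alpha,ji}\widehat{\bmSigma}_{\beta,ji})=E(\widehat{\bmSigma}_{\alpha,ji})E(\widehat{\bmSigma}_{\beta,ji})$ together with $\var(\widehat{\ell}_{ab,j})=O(n_{r\alpha}^{-1}+n_{r\beta}^{-1})$; the Lyapunov ratio computation and the bias step are correct as written.

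The one step that would fail as stated is your justification of the variance lower bound $\var(T_k)\gtrsim n_{r\alpha}^{-1}+n_{r\beta}^{-1}$, which is exactly the part of the lemma ($\rho_{l_{ab}}\asymp(p/n_{r\alpha}+p/n_{r\beta})^{1/2}$) that needs care. First, the auxiliary claim that ``the relevant Gaussian fourth-moment covariances are non-negative'' is false in general: $\cov(\widehat{\bmSigma}_{\alpha,ji},\widehat{\bmSigma}_{\alpha,j'i'})$ is, to leading order, $n_{r\alpha}^{-1}(\bmSigma_{\alpha,jj'}\bmSigma_{\alpha,ii'}+\bmSigma_{\alpha,ji'}\bmSigma_{\alpha,ij'})$, which can be negative when off-diagonal correlations are negative, so a sign argument cannot rule out cancellation. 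Second, the coefficient multiplying the diagonal fluctuation $\widehat{\bmSigma}_{\alpha,jj}-1$ in $E(T_k\mid\Z_{\alpha})$ is, to leading order, $2\ell_{ab,j}$, and under Condition \ref{con1} alone an individual $\ell_{ab,j}$ need not be bounded away from zero (that is Condition \ref{con_lmin_lower}, which this lemma does not assume), so the diagonal terms alone do not guarantee non-degeneracy. The repair stays inside your framework: the leading fluctuation of $E(T_k\mid\Z_{\alpha})$ is $2\,\tr\{M^T(\widehat{\bmSigma}_{\alpha}-\bmSigma_{\alpha})\}$ restricted to block $k$, with $M_{ji}=\ell_{ab,j}\bmSigma_{\beta,ji}$, and writing $M_s=(M+M^T)/2$ its variance equals, to leading order, $(2/n_{r\alpha})\,\tr(M_s\bmSigma_{\alpha}M_s\bmSigma_{\alpha})\ge (2c^2/n_{r\alpha})\|M_s\|_F^2\ge (2c^2/n_{r\alpha})\sum_{j\in\mathcal{N}(k)}\ell_{ab,j}^2$, with no sign condition needed; moreover, by the Schur product theorem and eigenvalue interlacing, $\sum_{j\in\mathcal{N}(k)}\ell_{ab,j}=\bm{1}^T(\bmSigma_{\alpha,\mathcal{N}(k)\mathcal{N}(k)}\circ\bmSigma_{\beta,\mathcal{N}(k)\mathcal{N}(k)})\bm{1}\ge c\,|\mathcal{N}(k)|$, so every block (of bounded size) contains some $\ell_{ab,j}\ge c$. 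This yields the per-block lower bound of order $n_{r\alpha}^{-1}$, and symmetrically of order $n_{r\beta}^{-1}$ via $E(T_k\mid\Z_{\beta})$; combined with your Hoeffding-decomposition inequality for independent panels, the claimed order of $\rho_{l_{ab}}$ follows and the rest of your argument goes through unchanged.
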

Lemmas~\ref{lTl_normality_a} and~\ref{lTl_normality} quantify the accuracy of the block-diagonal estimation of LD scores, which play an important role in establishing the statistical properties of LDSC. Analogous results can be derived for $\widehat{\bme}_{b}^T\widehat{\bme}_{b}$.

{\bf \noindent Connections with random-effect notations.}
The original LDSC papers \citep{bulik2015atlas,bulik2015ld} use a random-effect model with i.i.d. genetic effects for model setups. 
Specifically, it is assumed that each variant has a nonzero small contribution to both  $\y_{\alpha}$ and $\y_{\beta}$, and the joint distribution of  $\bmalpha$ and $\bmbeta$ is given by 
\begin{flalign*}
\begin{pmatrix} 
\bmalpha\\
\bmbeta
\end{pmatrix}
\stackrel{}{\sim} N
\left (
\begin{pmatrix} 
{\bf 0}\\
{\bf 0} 
\end{pmatrix},
p^{-1} \cdot
\begin{pmatrix} 
\phi^2_{\alpha} \I_p  & \phi_{\alpha\beta} \I_p \\
\phi_{\alpha\beta} \I_p & \phi^2_{\beta} \I_p,
\end{pmatrix}
\right ).  
\end{flalign*}

It is evident that the genetic variances $g^2_{\alpha}$ and $g^2_{\beta}$ defined in our fixed-effect models are related to the terms $\phi^2_{\alpha}$ and $\phi^2_{\beta}$ in random-effect models. 
Specifically, 
$g^2_{\alpha}=\bmalpha^T\bmalpha=\sum_{i=1}^{p}\alpha_i^2$ and 
$g^2_{\beta}=\sum_{i=1}^{p}\beta_i^2$ are sample versions of $\phi^2_{\alpha}$ and $\phi^2_{\beta}$, respectively, when we consider $\alpha_i$ and $\beta_i$ as random samples.
In addition, the misspecified random-effect models \citep{jiang2016high,jiang2023high}, which consider only a subset of genetic variants with i.i.d. genetic effects, can be seen as an analog of a simplified case in our analysis: Many elements in $\bmalpha$ and $\bmbeta$ are allowed to be zero, while the remaining elements exhibit similarities.
However, our fixed-effect models offer greater flexibility by allowing the genetic effects to vary over a wide range of magnitudes. It is worth emphasizing that the investigation of the statistical properties of our flexible fixed-effect models, particularly taking into account the dependence of the high-dimensional GWAS summary data, requires the use of different technical tools than those used for simplified random-effect models.

\subsection{LDSC-based estimators}\label{sec2.3}
This section presents LDSC-based estimators that are derived from the GWAS summary data and the estimated LD scores introduced in Section~\ref{sec2.2}. 
In univariate LDSC, the estimation of $\sigma_{\alpha}^2$ (or equivalently, $g^2_{\alpha}$) is based on the following equation \citep{bulik2015ld}. 
For $j=1,\ldots,p$, we have 
\begin{align}\label{equ_a_hat}
E\widehat{a}_j^2 =& E\left(n_\alpha^{-1} \X_{\alpha, j}^T{\X_{\alpha}}\bmalpha +
 n_\alpha^{-1} \X_{\alpha, j}^T \bmeps_{\alpha} \right)^2 \notag\\
  =& \sum_{i=1}^p \alpha_i^2 \bmSigma_{\alpha,ij}^2 
 + \sum_{i\neq k} \alpha_i \alpha_k \bmSigma_{\alpha,ij}\bmSigma_{\alpha,kj}
 + n_\alpha^{-1} \bmalpha^T \var (\X_{\alpha,1 j} \X_{\alpha,1 \cdot}) \bmalpha + n_\alpha^{-1} \sigma^2_{\epsilon_\alpha}\notag\\
 =& \sigma_{\alpha}^2 \cdot \sum_{i=1}^p \bmSigma_{\alpha,ij}^2 
  + \sum_{i=1}^p \alpha_i^2 \bmSigma_{\alpha,ij}^2 - \sigma_{\alpha}^2 \cdot\sum_{i=1}^p \bmSigma_{\alpha,ij}^2 
 + \sum_{i\neq k} \alpha_i \alpha_k \bmSigma_{\alpha,ij}\bmSigma_{\alpha,kj}\notag\\
 &+  n_\alpha^{-1} \bmalpha^T 
 {(\bmSigma_{\alpha} +
 \bmSigma_{\alpha, \cdot j}\bmSigma_{\alpha, \cdot j}^T)}
 \bmalpha 
 + n_\alpha^{-1} \sigma^2_{\epsilon_\alpha}\\
  =& \sigma_{\alpha}^2 \cdot \ell_{a,j}
  + \varepsilon_{a, j},\notag
\end{align}
where $\X_{\alpha,1 j}$ denotes the $(1,j)$ element of $\X_{\alpha}$,
$\X_{\alpha, 1 \cdot}$ denotes the first row of $\X_{\alpha}$, 
{$\bmSigma_{\alpha, \cdot j}$ denotes the $j$th column of $\bmSigma_{\alpha}$,}
and
\begin{align*}
\varepsilon_{a, j}
 =&
    \frac{n_\alpha+1}{n_\alpha}\bmalpha^T (\bmSigma_{\alpha, \cdot j}\bmSigma_{\alpha, \cdot j}^T) \bmalpha - \sigma_{\alpha}^2\cdot \sum_{i=1}^p \bmSigma_{\alpha,ij}^2 
    + \frac{\bmalpha^T 
    \bmSigma_{\alpha}
    \bmalpha + \sigma^2_{\epsilon_\alpha}}{n_\alpha}.
\end{align*}
{Equation \eqref{equ_a_hat} follows from Condition \ref{con1}.}
Let $\bm{\varepsilon}_a=(\varepsilon_{a,1}, \ldots, \varepsilon_{a,p})^T$. We have 
\begin{align}\label{equ_ols_a}
\w_{a}= 
\sigma_{\alpha}^2 \cdot \bme_{a} + \bm{\varepsilon}_a.
\end{align}
{While the slope of this regression may provide an estimate for $\sigma_{\alpha}^2$, a careful evaluation under the original random-effect model \citep{bulik2015ld} suggests that the expected value of $ \varepsilon_{a, j}$ could be a non-zero constant. This indicates that it may be more appropriate to consider an intercept term in our analytical analysis for LDSC estimation of genetic variance, based on the following observation} 
\begin{align*}
\sigma_{\alpha}^2 = & [(\bme_{a} - \mu_{\bme_{a}} \bm{1}_p)^T(\bme_{a} - \mu_{\bme_{a}} \bm{1}_p)]^{-1}(\bme_{a} - \mu_{\bme_{a}} \bm{1}_p)^T (\w_{a} - \mu_{\w_{a}} \bm{1}_p)\\
& - [(\bme_{a} - \mu_{\bme_{a}} \bm{1}_p)^T(\bme_{a} - \mu_{\bme_{a}} \bm{1}_p)]^{-1}(\bme_{a} - \mu_{\bme_{a}} \bm{1}_p)^T(\bm{\varepsilon}_a - \mu_{\bm{\varepsilon}_a}\bm{1}_p),
\end{align*}
where $\mu_{\bme_{a}}$, $\mu_{\w_{a}}$, and $\mu_{\bm{\varepsilon}_a}$ denote averages of elements in $\bme_{a}$, $\w_{a}$, and $\bm{\varepsilon}_a$, respectively. Here $\bm{1}_p$ denotes a $p$-dimensional vector with all elements being $1$'s.
More details are provided in Section S1.2 of the Supplementary Material.

In practice, LDSC estimates $\sigma_{\alpha}^2$ by regressing the squared GWAS summary data $\widehat{\w}_{a}$ on the estimated LD scores $\widehat{\bme}_{a}$. 
Therefore, the LDSC estimator of $\sigma_{\alpha}^2$ is given by
{\fxintercept
\begin{align}\label{estimator_univariate}
\widehat{\sigma}_{\alpha}^2=[(\widehat{\bme}_{a}-\widehat{\mu}_{\bme_{a}} \bm{1}_p)^T(\widehat{\bme}_{a}-\widehat{\mu}_{\bme_{a}}\bm{1}_p)]^{-1}(\widehat{\bme}_{a}-\widehat{\mu}_{\bme_{a}}\bm{1}_p)^T(\widehat{\w}_{a}-\widehat{\mu}_{\w_{a}}\bm{1}_p),
\end{align}
where $\widehat{\mu}_{\bme_{a}}$ and $\widehat{\mu}_{\w_{a}}$ denote averages of elements in $\widehat{\bme}_{a}$ and $\widehat{\w}_{a}$, respectively.}
This estimator $\widehat{\sigma}_{\alpha}^2$ resembles the ordinary least squares estimator of a simple linear regression model, considering the $p$ genetic variants as individual ``samples''. 
{\bxzzz The LDSC estimator for $\sigma_{\beta}^2$, denoted as $\widehat{\sigma}_{\beta}^2$, can be defined in a similar manner.}
In order for $\widehat{\sigma}_{\alpha}^2$ to be a valid estimator of $\sigma_{\alpha}^2$, it is necessary that there is a certain level of orthogonality between {\bxzzz mean-centered} $\bme_{a}$ and $\bm{\varepsilon}_a$. 
Further details regarding this orthogonality will be discussed in Section \ref{sec3}.

Bivariate LDSC estimates $\sigma_{\alpha\beta}$ with two sets of GWAS summary data and the estimated LD scores
\citep{bulik2015atlas}. 
For $j=1,\ldots,p$, we have
\begin{align*}
E(\widehat{a}_j\widehat{b}_j) 
=& \sigma_{\alpha\beta}\cdot \ell_{ab,j} + \varepsilon_{ab,j},
\end{align*}
where $\varepsilon_{ab,j}= \sum_{i=1}^{p}\alpha_{i}\beta_{i}\bmSigma_{\alpha,ij}\bmSigma_{\beta,ij} - 
 \sigma_{\alpha\beta} \cdot \sum_{i=1}^{p}\bmSigma_{\alpha,ij}\bmSigma_{\beta,ij}  +
\sum_{i\neq k}\alpha_{i}\beta_{k}\bmSigma_{\alpha,ij}\bmSigma_{\beta,kj}$.
Let $\bm{\varepsilon}_{ab}=(\varepsilon_{ab,1}, \ldots, \varepsilon_{ab,p})^T$. Then we have 
\begin{align}\label{equ_ols}
\w_{ab}=\sigma_{\alpha\beta} \cdot \bme_{ab} + \bm{\varepsilon}_{ab}.
\end{align}
{\bxzzz 
It is worth noting that the expected value of $\varepsilon_{ab,j}$ is zero under the original random-effect model \cite{bulik2015atlas} for genetic covariance. Further details can be found in Section S1.2 of the Supplementary Material. 
This indicates that no intercept is explicitly required when modeling the LDSC estimator for $\sigma_{\alpha\beta}$,
}
{\fxintercept
which is different from 
the estimator $\widehat{\sigma}_{\alpha}^2$ in (\ref{estimator_univariate}) for genetic variance. 
}{\fxintercept Since}
$\sigma_{\alpha\beta} = (\bme_{ab}^T\bme_{ab})^{-1}\bme_{ab}^T\w_{ab}- (\bme_{ab}^T\bme_{ab})^{-1}\bme_{ab}^T \bm{\varepsilon}_{ab}$,
the LDSC estimator of $\sigma_{\alpha\beta}$ {\bxzzz can be} obtained by regressing the product of GWAS summary data $\widehat{\w}_{ab}$ on the estimated LD scores $\widehat{\bme}_{ab}$, denoted as 
$$\widehat{\sigma}_{\alpha\beta}=(\widehat{\bme}_{ab}^T\widehat{\bme}_{ab})^{-1}\widehat{\bme}_{ab}^T\widehat{\w}_{ab}.$$

In (\ref{equ_ols}), we assume that the two GWAS cohorts used to estimate marginal genetic effects are independent. However, if there are overlapping samples between the two GWAS cohorts, the term $\bm{\varepsilon}_{ab}$ in \eqref{equ_ols} will differ, 
while the slope term $\sigma_{\alpha\beta}$ remains the same. This allows the LDSC slope to be used for estimating genetic covariance, regardless of whether the samples partially or fully overlap.  
{\bxzzz For our theoretical analysis, we may need to explicitly model the intercept in the overlapping sample case, similar to how it is handled in (\ref{estimator_univariate}) for genetic variance.}
To extend our analysis to cross-ancestry applications, we focus on the case of two independent GWAS datasets when investigating the statistical properties in later sections. In Section~\ref{sec5}, we numerically examine the effects of overlapping samples, and we provide further analysis in Section S3 of the Supplementary Material \citep{supplement} for sample overlaps.

{\fxintercept
To establish theoretical results for LDSC-based estimators $\widehat{\sigma}_{\alpha}^2$ and $\widehat{\sigma}_{\alpha\beta}$, 
we introduce the following notations and condition.
}
Let 
$\bme_{ab,\max}=\max\{|\bme_{ab, i}| \text{ for } i=1,\ldots, p\}$ and
$\bme_{ab,\min}=\min\{|\bme_{ab, i}| \text{ for } i=1,\ldots, p\}$.
{We define $\widehat{\bme}_{ab, \max}$ and $\widehat{\bme}_{ab, \min}$ analogously with $\bme_{ab, i}$ replaced by $\widehat{\bme}_{ab, i}$.
}
{\fxintercept
We also let  $\bme_{a, \max} = \max \{|\bme_{a, i}-\mu_{\bme_a}| \text{ for } i=1, \ldots, p\}$, $\bme_{a, \min} = \min \{|\bme_{a, i}-\mu_{\bme_a}| \text{ for } i=1, \ldots, p\}$, and define $\widehat{\bme}_{a, \max}$ and $\widehat{\bme}_{a, \min}$ analogously with $\bme_{a, i}$ replaced by $\widehat{\bme}_{a, i}$.
}
By Condition \ref{con1}, we  have $
\bme_{a, \max}=O(1)$ and $\bme_{ab, \max}=O(1)$.
We will need the following regularization condition for $\bme_{ab,\min}$ 
{\fxintercept
and $\bme_{a, \min}$.}

\begin{condition}\label{con_lmin_lower}\label{con_lmin_lower_2}
There exist positive constants $c$ and $C$ such that
$c\le \bme_{ab,\min}\le C$ 
{\fxintercept
and 
$\bme_{a,\min}\ge c$}.
\end{condition}

\section{Univariate LDSC}\label{sec3}
In this section, we examine the theoretical properties of $\widehat{\sigma}_{\alpha}^2$ defined in (\ref{estimator_univariate}), which is the estimator of $\sigma_{\alpha}^2$ in univariate LDSC.
{\fxintercept
First, we establish the asymptotic normality of
$(\widehat{\bme}_{a} - \widehat{\mu}_{\bme_{a}} \bm{1}_p)^T (\w_{a} - \mu_{\w_{a}} \bm{1}_p)=\widehat{\bme}_{a}^T \bmH \w_{a}$
by considering the following regularity condition on $\bmH\w_{a}$, where
$\bmH=(\I_p - \bm{1}_p\bm{1}_p^T/p)^2=\I_p - \bm{1}_p\bm{1}_p^T/p$ is a centering matrix.
}

\begin{condition}\label{con_converge_rate_a}
We  assume that
{\fxintercept
$$\frac{\|\bmH\w_{a}\|_3}{\|\bmH\w_{a}\|_2}\to 0$$
}
as $p \to \infty$. Here we make the convention that {\fxintercept
$\|\bmH\w_{a}\|_3/\|\bmH\w_{a}\|_2=0$} if 
{\fxintercept
$\bmH\w_{a}=\bm{0}$}.
\end{condition}

Condition~\ref{con_converge_rate_a} can be satisfied when the number of comparable elements in {\fxintercept $\bmH\w_{a}$} diverges as $p$ goes to infinity. As the elements in $\w_{a}$ are related to marginal genetic effects, this condition aligns with the major insights emerging from GWAS findings. Specifically, it has been found that most complex traits are associated with many common variants, either through direct genetic effects or indirect effects mediated by relatedness among variants in LD \citep{visscher201710}.
Based on Conditions \ref{con1} and \ref{con_converge_rate_a}, we have the following theorem on the asymptotic behavior of 
{\fxintercept
$\widehat{\bme}_{a}^T\bmH\w_{a}$}.

\begin{thm}\label{lTw_normality_a}
Under Conditions \ref{con1} and \ref{con_converge_rate_a}, we have
{\fxintercept
\begin{align*}
\frac{(\widehat{\bme}_{a}-E\widehat{\bme}_{a})^T \bmH \w_{a}}{\rho_{a}} \overset{d}{\to} N(0,1)
\end{align*}
}
as $\min(n_{r\alpha},p)\to \infty$,
where 
{\fxintercept
$\rho_{a}^2=\var (\widehat{\bme}_{a}^T \bmH \w_{a})$.}
Moreover, we have {\fxintercept $\rho_{a} \asymp \|\bmH\w_{a}\|/\sqrt{n_{r\alpha}}$}
and 
{\fxintercept
$$(E\widehat{\bme}_{a} - \bme_a)^T \bmH \w_{a}=O\left(\frac{
\|\bmH\w_{a}\|_1}{n_{r\alpha}}\right).$$}
\end{thm}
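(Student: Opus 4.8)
The plan is to write the centered numerator as a sum of independent, LD-block-indexed contributions and apply the Lyapunov central limit theorem, then pin down $\rho_a$ by a direct variance calculation and bound the bias coordinatewise. Since $\bmH$ is symmetric idempotent and $\bmH\w_a$ is deterministic, the numerator is the centered version of $\widehat{\bme}_a^T\bmH\w_a = \sum_j \widehat{\ell}_{a,j}(\bmH\w_a)_j$, so I would first write
\[
(\widehat{\bme}_a - E\widehat{\bme}_a)^T\bmH\w_a = \sum_{k=1}^{p_b} T_k, \qquad T_k := \sum_{j\in\mathcal{N}(k)}(\bmH\w_a)_j\,(\widehat{\ell}_{a,j} - E\widehat{\ell}_{a,j}).
\]
Each $T_k$ is a function of $\{\widehat{\ell}_{a,j}:j\in\mathcal{N}(k)\}$ only, so by Lemma~S1 (Gaussianity and block-diagonal $\bmSigma_\alpha$, Condition~\ref{con1}(a)--(b)) the $T_k$ are independent and mean zero, and $p_b\ge p/q_b\to\infty$ since $q_b=O(1)$. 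Writing $\widehat{\bmSigma}_{\alpha,ji}-\bmSigma_{\alpha,ji}$ as a normalized sum of $n_{r\alpha}$ i.i.d. mean-zero sub-exponential variables gives $E|\widehat{\bmSigma}_{\alpha,ji}-\bmSigma_{\alpha,ji}|^m\lesssim_m n_{r\alpha}^{-m/2}$ uniformly in $i,j$ (the correlation is bounded by $1$); expanding $\widehat{\ell}_{a,j}-E\widehat{\ell}_{a,j}=\sum_{i\in\mathcal{N}(k)}[2\bmSigma_{\alpha,ji}(\widehat{\bmSigma}_{\alpha,ji}-\bmSigma_{\alpha,ji})+(\widehat{\bmSigma}_{\alpha,ji}-\bmSigma_{\alpha,ji})^2-\var(\widehat{\bmSigma}_{\alpha,ji})]$ over the at most $q_b$ within-block indices yields $E|\widehat{\ell}_{a,j}-E\widehat{\ell}_{a,j}|^m\lesssim_m n_{r\alpha}^{-m/2}$, and then Minkowski plus a power-mean bound over the block give $\var(T_k)\lesssim n_{r\alpha}^{-1}\sum_{j\in\mathcal{N}(k)}(\bmH\w_a)_j^2$ and $E|T_k|^3\lesssim n_{r\alpha}^{-3/2}\sum_{j\in\mathcal{N}(k)}|(\bmH\w_a)_j|^3$.

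Summing the variance bound gives $\rho_a^2=\sum_k\var(T_k)\lesssim \|\bmH\w_a\|^2/n_{r\alpha}$, and I expect the main obstacle to be the matching lower bound, i.e.\ showing that uniformly over blocks $k$ and coefficient vectors $c$ supported on $\mathcal{N}(k)$ one has $\var\!\big(\sum_{j\in\mathcal{N}(k)}c_j\widehat{\ell}_{a,j}\big)\gtrsim \|c\|^2/n_{r\alpha}$ — a priori the weights $\bmH\w_a$ could be aligned so as to cancel the off-diagonal contributions $\widehat{\bmSigma}_{\alpha,ji}^2$ within a block. The route is to observe that the leading-order fluctuation $\sum_j c_j(\widehat{\ell}_{a,j}-E\widehat{\ell}_{a,j})$ equals, up to $O_p(n_{r\alpha}^{-1})$, a linear functional of the within-block fluctuation vector $\{\widehat{\bmSigma}_{\alpha,ij}-\bmSigma_{\alpha,ij}:i\le j,\ i,j\in\mathcal{N}(k)\}$ whose coefficient on the diagonal entry $(j,j)$ is $2c_j$ — so this linear map is injective in $c$ — while the Gaussian covariance of that fluctuation vector, with $((i,j),(k,l))$-entry proportional to $n_{r\alpha}^{-1}(\bmSigma_{\alpha,ik}\bmSigma_{\alpha,jl}+\bmSigma_{\alpha,il}\bmSigma_{\alpha,jk})$, is bounded below by $c'n_{r\alpha}^{-1}\I$ because $\lambda_{\min}(\bmSigma_\alpha)\ge c$ forces every principal submatrix of $\bmSigma_\alpha$ to have smallest eigenvalue at least $c$ (Condition~\ref{con1}(a)); the remaining quadratic and cross terms contribute only $o(n_{r\alpha}^{-1}\|c\|^2)$ to the variance. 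This gives $\var(T_k)\gtrsim n_{r\alpha}^{-1}\sum_{j\in\mathcal{N}(k)}(\bmH\w_a)_j^2$, hence $\rho_a\asymp \|\bmH\w_a\|/\sqrt{n_{r\alpha}}$, so in particular $\rho_a>0$ whenever $\bmH\w_a\ne\bm{0}$ (the case $\bmH\w_a=\bm{0}$ being the stated convention).

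Combining the moment bounds with the order of $\rho_a$ gives the Lyapunov ratio
\[
\frac{\sum_{k=1}^{p_b}E|T_k|^3}{\rho_a^3}\;\lesssim\;\frac{n_{r\alpha}^{-3/2}\sum_{k}\sum_{j\in\mathcal{N}(k)}|(\bmH\w_a)_j|^3}{n_{r\alpha}^{-3/2}\|\bmH\w_a\|^3}\;=\;\frac{\|\bmH\w_a\|_3^3}{\|\bmH\w_a\|_2^3}\;\to\;0
\]
by Condition~\ref{con_converge_rate_a}, where I used that the blocks partition $\{1,\dots,p\}$ so $\sum_k\sum_{j\in\mathcal{N}(k)}|(\bmH\w_a)_j|^3=\|\bmH\w_a\|_3^3$. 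The Lyapunov CLT (exponent $2+\delta=3$) for the independent array $\{T_k\}$ then yields $(\widehat{\bme}_a-E\widehat{\bme}_a)^T\bmH\w_a/\rho_a\overset{d}{\to}N(0,1)$ as $\min(n_{r\alpha},p)\to\infty$. For the bias, since $\bmSigma_{\alpha,ji}=0$ for $i\notin\mathcal{N}(k)$ (Condition~\ref{con1}(b)) we have $\ell_{a,j}=\sum_{i\in\mathcal{N}(k)}\bmSigma_{\alpha,ji}^2$, hence $E\widehat{\ell}_{a,j}-\ell_{a,j}=\sum_{i\in\mathcal{N}(k)}\var(\widehat{\bmSigma}_{\alpha,ji})=O(n_{r\alpha}^{-1})$ uniformly in $j$ (using $\var(\widehat{\bmSigma}_{\alpha,ji})\lesssim n_{r\alpha}^{-1}$ and $|\mathcal{N}(k)|\le q_b$), so $|(E\widehat{\bme}_a-\bme_a)^T\bmH\w_a|\le(\max_j|E\widehat{\ell}_{a,j}-\ell_{a,j}|)\,\|\bmH\w_a\|_1=O(\|\bmH\w_a\|_1/n_{r\alpha})$; this last step is immediate from the LD-score computations already behind Lemma~\ref{lTl_normality_a}, so the only genuinely delicate point is the block-level variance lower bound above.
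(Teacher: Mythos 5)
Your proposal is correct and follows essentially the same route as the paper's argument: decompose $(\widehat{\bme}_{a}-E\widehat{\bme}_{a})^T\bmH\w_{a}$ into independent block-wise sums (via the Gaussianity and block-diagonal structure in Condition~\ref{con1}), apply a Lyapunov-type CLT with third moments so that Condition~\ref{con_converge_rate_a} is exactly the Lyapunov ratio, establish $\rho_a\asymp\|\bmH\w_a\|/\sqrt{n_{r\alpha}}$ by two-sided variance bounds on the sample-covariance fluctuations, and bound the bias through the uniform $O(n_{r\alpha}^{-1})$ per-coordinate bias of $\widehat{\ell}_{a,j}$. Your block-level variance lower bound (via the Frobenius/Wishart-fluctuation argument and the diagonal coefficients) is sound and matches the role this step plays in the paper's proof.
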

Second, we establish the asymptotic normality of 
{\fxintercept
$$\widetilde{\sigma}_{\alpha}^2=[(\bme_{a} - \mu_{\bme_{a}} \bm{1}_p)^T(\bme_{a} - \mu_{\bme_{a}} \bm{1}_p)]^{-1}(\bme_{a} - \mu_{\bme_{a}} \bm{1}_p)^T (\widehat{\w}_{a} - \widehat{\mu}_{\w_{a}} \bm{1}_p),$$
}which can be considered as an oracle LDSC estimator with known LD scores. To ensure the appropriate behavior of $\widehat{\w}_{a}$, we impose the following conditions.
\begin{condition}\label{con_pn_relation_alpha}
We assume that $p=O(n_{\alpha})$.
\end{condition}
\begin{condition}\label{con_condeff_lower_a}
We assume that
$n_{\alpha}^{1/6}\|\bmalpha\| \to \infty$. 
\end{condition}

In Condition~\ref{con_pn_relation_alpha}, we assume that the number of genetic variants $p$ is proportional to or smaller than the GWAS sample size $n_\alpha$. It is worth noting that a {\bxzzz genome-wide} LDSC analysis typically involves around one million genetic variants, and it is known that the GWAS summary statistics used in LDSC cannot be generated from  GWAS with very small sample sizes, such as fewer than $5,000$ individuals \citep{zhao2022genetic}.
Condition~\ref{con_condeff_lower_a} implies that the aggregated genetic effects, represented by $\|\bmalpha\|$, cannot be too weak, and a larger GWAS sample size enables LDSC to be effective in studying traits with a lower level of genetic influence.
Under these conditions, the following theorem provides  the asymptotic normality of the oracle estimator $\widetilde{\sigma}_{\alpha}^2$, which serves as a crucial intermediate step in establishing the asymptotic normality of $\widehat{\sigma}_{\alpha}^2$.

\begin{thm}\label{thm_variance_dep}
Under {\fxintercept  Conditions 
\ref{con1}, 
\ref{con_lmin_lower},
\ref{con_pn_relation_alpha}, and 
\ref{con_condeff_lower_a},
} 
we have
\begin{equation*}
\frac{\widetilde{\sigma}_{\alpha}^2 - E\widetilde{\sigma}_{\alpha}^2}{\zeta_{\alpha} (\bme_{a})} \overset{d}{\to} N(0,1)
\end{equation*}
{as $\min(n_{\alpha},p)\to \infty$},
where $\zeta^2_{\alpha}(\bme_{a}) = \var(\widetilde{\sigma}_{\alpha}^2)$ and 
{\fxintercept
$\bm{D}_{\alpha}=\text{diag}([\bmH\bme_a]_{1}, \ldots, [\bmH\bme_a]_{p})$
with $[\bmH\bme_a]_{i}$ denoting the $i$th element in $\bmH\bme_a$.
}
Moreover, we have
\begin{align*}
\zeta^2_{\alpha}(\bme_{a})
   = & \frac{2}{n_{\alpha}^3(\bme_{a}^T \bmH \bme_{a})^2} 
   \left[ 2\sigma_{\epsilon_{\alpha}}^2 \cdot \left\{ (n_{\alpha}+2) (n_{\alpha}+3) \bmalpha^T\bm{\Sigma}_{\alpha}\bm{D}_{\alpha}\bm{\Sigma}_{\alpha} \bm{D}_{\alpha}\bm{\Sigma}_{\alpha}\bmalpha  \right.\right. \notag\\
   & 
   \left. + (n_{\alpha} + 2) \tr \left(\bm{\Sigma}_{\alpha}\bm{D}_{\alpha} \bm{\Sigma}_{\alpha}\bm{D}_{\alpha}\right) \bmalpha^T\bm{\Sigma}_{\alpha} \bmalpha \right\}
 + \sigma_{\epsilon_{\alpha}}^4 \cdot (n_{\alpha}+2) \cdot \tr \left(\bm{\Sigma}_{\alpha}\bm{D}_{\alpha} \bm{\Sigma}_{\alpha}\bm{D}_{\alpha}\right) \notag\\
& +  (2n_{\alpha}^2+5n_{\alpha}+3)(\bmalpha^T\bm{\Sigma}_{\alpha} \bm{D}_{\alpha}\bm{\Sigma}_{\alpha}\bmalpha )^2 
+ (n_{\alpha}+2)(\bmalpha^T\bm{\Sigma}_{\alpha} \bmalpha)^2 \tr (\bm{\Sigma}_{\alpha}\bm{D}_{\alpha} \bm{\Sigma}_{\alpha}\bm{D}_{\alpha}) \notag\\
& \left. + 2 (n_{\alpha}+2) (n_{\alpha}+3) \bmalpha^T\bm{\Sigma}_{\alpha} \bmalpha \cdot \bmalpha^T\bm{\Sigma}_{\alpha} \bm{D}_{\alpha}\bm{\Sigma}_{\alpha}\bm{D}_{\alpha}\bm{\Sigma}_{\alpha} \bmalpha \right].
\end{align*}
\end{thm}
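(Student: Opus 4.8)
The plan is to first peel off everything deterministic and reduce the statement to a central limit theorem, together with an exact variance formula, for a single quadratic form in the GWAS data. Since the LD scores are treated as known in the oracle estimator, the Gram factor $(\bme_{a}-\mu_{\bme_{a}}\bm{1}_{p})^{T}(\bme_{a}-\mu_{\bme_{a}}\bm{1}_{p})=\bme_{a}^{T}\bmH\bme_{a}$ is non-random, and using $\bmH^{2}=\bmH$ together with $\bme_{a}^{T}\bmH\widehat{\w}_{a}=\sum_{j}[\bmH\bme_{a}]_{j}\widehat{a}_{j}^{2}$ one gets $\widetilde{\sigma}_{\alpha}^{2}=(\bme_{a}^{T}\bmH\bme_{a})^{-1}\widehat{\bm{a}}^{T}\bm{D}_{\alpha}\widehat{\bm{a}}$, where $\widehat{\bm{a}}=n_{\alpha}^{-1}\X_{\alpha}^{T}\y_{\alpha}=(\widehat{a}_{1},\dots,\widehat{a}_{p})^{T}$. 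Hence it suffices to prove asymptotic normality of $T_{n}:=\widehat{\bm{a}}^{T}\bm{D}_{\alpha}\widehat{\bm{a}}$ and to compute $\var(T_{n})$ exactly; the stated $\zeta_{\alpha}^{2}(\bme_{a})$ is then $(\bme_{a}^{T}\bmH\bme_{a})^{-2}\var(T_{n})$. I would record at the outset the identity $\tr(\bm{D}_{\alpha}\bmSigma_{\alpha})=\sum_{j}[\bmH\bme_{a}]_{j}\bmSigma_{\alpha,jj}=\bm{1}_{p}^{T}\bmH\bme_{a}=0$ (using $\bmSigma_{\alpha,jj}=1$ from Condition~\ref{con1}~(a) and $\bm{1}_{p}^{T}\bmH=\bm{0}$); these vanishing traces are exactly what let the otherwise-bulky second moment collapse to the displayed closed form.

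Next I would expose independence by indexing over the $n_{\alpha}$ GWAS subjects rather than over variants. Writing $\X_{\alpha,i\cdot}$ for the $i$th row of $\X_{\alpha}$ and $y_{\alpha,i}$ for the $i$th entry of $\y_{\alpha}$, set $\bm{\xi}_{i}=\X_{\alpha,i\cdot}^{T}y_{\alpha,i}=\X_{\alpha,i\cdot}^{T}(\X_{\alpha,i\cdot}\bmalpha+\epsilon_{\alpha,i})$. By Condition~\ref{con1}~(a) these are i.i.d. $p$-vectors with mean $\bmSigma_{\alpha}\bmalpha$, and $\widehat{\bm{a}}=n_{\alpha}^{-1}\sum_{i}\bm{\xi}_{i}$, so $T_{n}=n_{\alpha}^{-2}\sum_{i,i'}\bm{\xi}_{i}^{T}\bm{D}_{\alpha}\bm{\xi}_{i'}$ is a degree-two $V$-statistic with symmetric kernel $h(\bm{x},\bm{y})=\bm{x}^{T}\bm{D}_{\alpha}\bm{y}$. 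With $\bm{\eta}_{i}=\bm{\xi}_{i}-\bmSigma_{\alpha}\bmalpha$, the Hoeffding expansion splits $T_{n}$ into a deterministic part (equal to $E T_{n}$), a linear term $L_{n}=2n_{\alpha}^{-1}\sum_{i}\bmalpha^{T}\bmSigma_{\alpha}\bm{D}_{\alpha}\bm{\eta}_{i}$ that is a sum of i.i.d. centred scalars, a diagonal term $n_{\alpha}^{-2}\sum_{i}(\bm{\eta}_{i}^{T}\bm{D}_{\alpha}\bm{\eta}_{i}-E[\,\cdot\,])$, and a degenerate second-order $U$-statistic $n_{\alpha}^{-2}\sum_{i\neq i'}\bm{\eta}_{i}^{T}\bm{D}_{\alpha}\bm{\eta}_{i'}$. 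A moment's accounting shows that $L_{n}$ carries the leading $O(n_{\alpha}^{2})$ contributions to the bracket in $\zeta_{\alpha}^{2}(\bme_{a})$ — those built from $\bmalpha^{T}\bmSigma_{\alpha}\bm{D}_{\alpha}\bmSigma_{\alpha}\bmalpha$, $\sigma_{\epsilon_{\alpha}}^{2}\bmalpha^{T}\bmSigma_{\alpha}\bm{D}_{\alpha}\bmSigma_{\alpha}\bm{D}_{\alpha}\bmSigma_{\alpha}\bmalpha$ and $\bmalpha^{T}\bmSigma_{\alpha}\bmalpha\cdot\bmalpha^{T}\bmSigma_{\alpha}\bm{D}_{\alpha}\bmSigma_{\alpha}\bm{D}_{\alpha}\bmSigma_{\alpha}\bmalpha$ — while the degenerate $U$-statistic supplies the $O(n_{\alpha})$ trace terms involving $\tr(\bmSigma_{\alpha}\bm{D}_{\alpha}\bmSigma_{\alpha}\bm{D}_{\alpha})$, and the diagonal term is of strictly smaller order.

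To obtain a single joint CLT for these pieces I would pass to the martingale-difference representation relative to $\mathcal{F}_{k}=\sigma(\bm{\xi}_{1},\dots,\bm{\xi}_{k})$: $T_{n}-E T_{n}=\sum_{k=1}^{n_{\alpha}}\Delta_{k}$ with $\Delta_{k}=2n_{\alpha}^{-1}\bmalpha^{T}\bmSigma_{\alpha}\bm{D}_{\alpha}\bm{\eta}_{k}+n_{\alpha}^{-2}(\bm{\eta}_{k}^{T}\bm{D}_{\alpha}\bm{\eta}_{k}-E[\,\cdot\,])+2n_{\alpha}^{-2}\bm{\eta}_{k}^{T}\bm{D}_{\alpha}\sum_{i<k}\bm{\eta}_{i}$, and then apply a martingale central limit theorem (of Hall--Heyde type): verify (i) the predictable variation $\sum_{k}E[\Delta_{k}^{2}\mid\mathcal{F}_{k-1}]$ converges in probability to $\zeta_{\alpha}^{2}(\bme_{a})$, and (ii) the conditional Lindeberg condition, which I would get from a conditional Lyapunov bound $\sum_{k}E\Delta_{k}^{4}=o(\zeta_{\alpha}^{4})$. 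Step (i) rests on the same Isserlis-type Gaussian bookkeeping as the exact variance formula, and the block-diagonal structure in Condition~\ref{con1}~(b) (blocks of size $O(1)$) guarantees that every trace and quadratic form appearing is finite and of controlled order — in particular $\tr(\bmSigma_{\alpha}\bm{D}_{\alpha}\bmSigma_{\alpha}\bm{D}_{\alpha})\asymp\sum_{j}[\bmH\bme_{a}]_{j}^{2}=\bme_{a}^{T}\bmH\bme_{a}\asymp p$ by Condition~\ref{con_lmin_lower} and $\bme_{a,\max}=O(1)$ — and it ensures the kernel of the degenerate part has no dominating eigenvalue (there are $\asymp p$ comparable ones), so that piece is asymptotically Gaussian rather than a weighted $\chi^{2}$. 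An equivalent route, noting that $T_{n}$ is a polynomial of degree at most $4$ in the i.i.d. Gaussian entries of $\X_{\alpha_{0}}$ and $\bmeps_{\alpha}$, is to project $T_{n}-E T_{n}$ onto Wiener chaoses of orders $1$ through $4$ and invoke the multivariate fourth-moment theorem; this reduces the CLT to the same two ingredients, convergence of the variance and a fourth-cumulant bound.

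I expect the bottleneck to be twofold. First, the exact second-moment computation: expanding $E(T_{n}^{2})$ by Wick's theorem produces many terms, a number of which cancel through $\tr(\bm{D}_{\alpha}\bmSigma_{\alpha})=\bm{1}_{p}^{T}\bmH\bme_{a}=0$, and collating the survivors into precisely the displayed $\zeta_{\alpha}^{2}(\bme_{a})$ is lengthy bookkeeping (Condition~\ref{con1}~(c), $\sigma_{\epsilon_{\alpha}}^{2}\asymp\|\bmalpha\|^{2}$, keeps the $\sigma_{\epsilon_{\alpha}}$-terms and the $\bmalpha$-terms on comparable footing). Second, verifying the Lyapunov / fourth-cumulant bound: one must control the fourth moments of the degree-four Gaussian forms $\bm{\eta}_{k}^{T}\bm{D}_{\alpha}\bm{\eta}_{i}$ together with the non-Gaussian self-interaction pieces inside $\bm{\xi}_{i}$, and show they are $o(\zeta_{\alpha}^{4})$. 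This is exactly where Conditions~\ref{con_pn_relation_alpha} and~\ref{con_condeff_lower_a} are consumed: $p=O(n_{\alpha})$ keeps the trace-type remainder terms small relative to $n_{\alpha}\,\zeta_{\alpha}^{2}$, and $n_{\alpha}^{1/6}\|\bmalpha\|\to\infty$ prevents $\zeta_{\alpha}^{2}$ from being swamped by those remainders — equivalently, it keeps the Lyapunov ratio vanishing and the limit non-degenerate.
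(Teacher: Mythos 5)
Your reduction and decomposition are sound, and they reproduce the structure of the stated variance: writing $\widetilde{\sigma}_{\alpha}^2=(\bme_{a}^T\bmH\bme_{a})^{-1}\widehat{\bm{a}}^T\bm{D}_{\alpha}\widehat{\bm{a}}$ (using $\bmH\bm{1}_p=\bm{0}$), noting $\tr(\bm{D}_{\alpha}\bmSigma_{\alpha})=\bm{1}_p^T\bmH\bme_{a}=0$, and checking that the subject-level linear (Hájek) term yields the $O(n_{\alpha}^2)$ terms in the bracket while the degenerate pairwise term yields the $O(n_{\alpha})$ trace terms is consistent with the displayed $\zeta^2_{\alpha}(\bme_a)$, and under $p=O(n_{\alpha})$ both pieces can be of the same order, which your joint martingale treatment correctly accommodates. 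However, this is a genuinely different route from the paper's. The paper does not use a Hoeffding/martingale decomposition with a Hall--Heyde CLT (nor the Wiener-chaos fourth-moment alternative you mention); it establishes the normal approximation for the dependent, degree-four Gaussian functional directly via the total-variation/second-order Poincaré machinery of \citep{chatterjee2009fluctuations}, and it obtains the exact closed form of $\zeta^2_{\alpha}(\bme_{a})$ from a generalized version of the Gaussian quadratic-form moment lemmas of \citep{dicker2014variance} rather than from term-by-term Wick bookkeeping. The trade-off is roughly this: the paper's approach handles the full dependence of $\widehat{\w}_{a}$ "in one shot," with the CLT reduced to bounding gradient/Hessian moments of the functional and the variance delivered by reusable moment identities (which is where the exact coefficients $(n_{\alpha}+2)(n_{\alpha}+3)$, $2n_{\alpha}^2+5n_{\alpha}+3$, etc., come from); your route is more classical and elementary but pushes the difficulty into two deferred verifications that you correctly identify as the bottleneck — the exact second-moment collation (where you must also keep the nonzero cross-moments between the subject-level linear and quadratic pieces, since $\eta_i$ is non-Gaussian and these contribute at the $1/n_{\alpha}^2$ order retained in the exact formula) and the concentration of the predictable variation plus the conditional Lindeberg/Lyapunov bound. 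Provided that bookkeeping is carried out, your argument would deliver the same conclusion; one small attribution quibble is that the comparability $\tr(\bmSigma_{\alpha}\bm{D}_{\alpha}\bmSigma_{\alpha}\bm{D}_{\alpha})\asymp\bme_{a}^T\bmH\bme_{a}\asymp p$ and the "no dominating eigenvalue" property of the degenerate kernel come from the bounded spectrum of $\bmSigma_{\alpha}$ together with Condition~\ref{con_lmin_lower} and $\bme_{a,\max}=O(1)$, not from the block-diagonal structure of Condition~\ref{con1}~(b) per se.
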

\begin{remark}\label{thm_variance_dep_remark}
While Theorem \ref{thm_variance_dep} is specifically derived for 
{\fxintercept mean-centered LD scores $\bmH\bme_{a}$}, it can be applied to any generic vectors as long as the absolute
values of their elements are bounded by constants from both above and below. This allows us to replace {\fxintercept $\bmH \bme_{a}$} with 
{\fxintercept $\bmH \widehat{\bme}_{a}$} when absolute
values in 
{\fxintercept
$\bmH \widehat{\bme}_{a}$} 
are bounded.
\end{remark}

Although the estimated LD scores $\widehat{\bme}_{a}$ only exhibit a block-diagonal relatedness pattern, there is dependence among all elements in $\widehat{\w}_{a}$ as they are estimated based on the same vector of samples $\y_{\alpha}$ from the GWAS-I dataset. 
Handling this high-dimensional dependence poses a substantial challenge in the proof of Theorem~\ref{thm_variance_dep}. 
To address this challenge, we use techniques developed for dependent random variables to establish convergence in distribution results via total variation \citep{chatterjee2009fluctuations}. 
Additionally, {\bxzzz a more general version of} the variance estimation lemmas in \cite{dicker2014variance} {\bxzzz plays} a crucial role in deriving the explicit form of $\zeta^2_{\alpha}(\bme_{a})$.
Next, we establish the theoretical properties of the LDSC estimator $\widehat{\sigma}_{\alpha}^2$, which needs the following additional regularity conditions. 

\begin{condition}\label{sigma_con_a}
We assume that
{$\sigma_{\alpha}^2=O(1)$} and 
${\fxintercept \left|\bme_{a}^T \bmH \bm{\varepsilon}_a\right|} = o(p^{1/2})$.

\end{condition}

\begin{condition}\label{con_coneff_lower_a}
We assume that $\|\bmalpha\|^4 \gtrsim n_{\alpha} p$.
\end{condition}

\begin{condition}\label{con_alpha_4norm}  
We assume that
\begin{align*}
\frac{n_{\alpha}\|\bmalpha\|_4^4}{n_{r\alpha} \|\bmalpha\|_2^4} = o(1).
\end{align*}
\end{condition}
\begin{condition}\label{con_nr_lower_a}
We assume that $n_{r\alpha} \gg \max (\sqrt{p}, \sqrt{n_{\alpha}})$. 
\end{condition}

Condition~\ref{sigma_con_a} specifies the orthogonality between {\fxintercept $\bmH\bme_{a}$} and {\fxintercept
$\bmH\bm{\varepsilon}_a$} to ensure that the LDSC estimator $\widehat{\sigma}_{\alpha}^2$ behaves as expected and converges to its true value $\sigma_{\alpha}^2$. 
{\bxzzz 
Intuitively, in the LDSC framework, where the 
$p$ genetic variants are considered as "samples", $o(p^{1/2})$ suggests that the orthogonality level is required to be lower than the square root of the sample size. 
{\fxintercept
This orthogonality assumption essentially imposes a requirement on the true effects $\bmalpha$.}
We provide more details in Section S1.3 of the Supplementary Material.}
Similar to Condition~\ref{con_condeff_lower_a}, Condition~\ref{con_coneff_lower_a} imposes constraints on the strength of aggregated genetic effects.
As indicated in (28) of the Supplementary Material, the convergence of the asymptotic normality of $\widehat{\sigma}_{\alpha}^2$ depends on the left-hand side of  Condition~\ref{con_alpha_4norm}. The sparsity of $\bmalpha$ is related to Condition~\ref{con_alpha_4norm}, since it can be easily satisfied when the number of non-zero elements in $\bmalpha$ is large and they are comparable. Thus, sparser $\bmalpha$ may result in slower convergence of the asymptotic normality of $\widehat{\sigma}_{\alpha}^2$, and LDSC is more effective for traits with denser genetic signals.
Additionally, Conditions~\ref{con_alpha_4norm} and~\ref{con_nr_lower_a} indicate that the reference panels are allowed to have a much smaller sample size than that of GWAS data. The following theorem characterizes the asymptotic normality of $\widehat{\sigma}_{\alpha}^2$. 

\begin{thm}\label{thm_clt_var}
Under 
{\fxintercept
Conditions 
\ref{con1}--\ref{con_nr_lower_a},
}
 we have
\begin{align*}
\frac{\widehat{\sigma}_{\alpha}^2 - \sigma_{\alpha}^2}{\zeta_{\alpha} (\bme_{a})}  \overset{d}{\to} N(0,1)
\end{align*}
and $\zeta_{\alpha} (\bme_{a})\to 0$
as $\min(n_{\alpha},n_{r\alpha},p)\to \infty$. 
\end{thm}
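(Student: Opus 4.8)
The plan is to decompose $\widehat{\sigma}_{\alpha}^2 - \sigma_{\alpha}^2$ into a sum of terms, each of which is controlled using a result already established in the excerpt, and then show that only one term—the one driving $\widetilde{\sigma}_{\alpha}^2$—contributes to the limiting distribution while all others are $o_p(\zeta_{\alpha}(\bme_a))$. Starting from the regression form of $\widehat{\sigma}_{\alpha}^2$ in~\eqref{estimator_univariate}, I would write
\begin{align*}
\widehat{\sigma}_{\alpha}^2 - \sigma_{\alpha}^2 = \big(\widehat{\sigma}_{\alpha}^2 - \widetilde{\sigma}_{\alpha}^2\big) + \big(\widetilde{\sigma}_{\alpha}^2 - E\widetilde{\sigma}_{\alpha}^2\big) + \big(E\widetilde{\sigma}_{\alpha}^2 - \sigma_{\alpha}^2\big),
\end{align*}
where $\widetilde{\sigma}_{\alpha}^2$ is the oracle estimator with known LD scores from Theorem~\ref{thm_variance_dep}. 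The middle term, after dividing by $\zeta_{\alpha}(\bme_a)$, converges to $N(0,1)$ by Theorem~\ref{thm_variance_dep} (whose conditions are implied by Conditions~\ref{con1}--\ref{con_nr_lower_a}). The third (bias) term is deterministic; using~\eqref{equ_ols_a}, $E\widetilde{\sigma}_{\alpha}^2 - \sigma_{\alpha}^2 = (\bme_a^T\bmH\bme_a)^{-1}\bme_a^T\bmH(E\widehat{\w}_a - \w_a) - (\bme_a^T\bmH\bme_a)^{-1}\bme_a^T\bmH\bm{\varepsilon}_a$; the first piece vanishes since $E\widehat{\w}_a=\w_a$ by construction, and the second is $o(p^{1/2})/(p\cdot\Theta(1)) = o(p^{-1/2})$ by Condition~\ref{sigma_con_a} together with $\bme_a^T\bmH\bme_a \asymp p$ (which follows from $c\le\bme_{a,\min}\le\bme_{a,\max}=O(1)$ via Conditions~\ref{con1} and~\ref{con_lmin_lower}). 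So the third term is negligible relative to $\zeta_{\alpha}(\bme_a)$ provided $\zeta_{\alpha}(\bme_a)\gg p^{-1/2}$; I would verify this from the explicit variance formula in Theorem~\ref{thm_variance_dep}, where under Conditions~\ref{con_h},~\ref{con_coneff_lower_a} the dominant terms give $\zeta_{\alpha}^2(\bme_a)\asymp \|\bmalpha\|^4/(n_{\alpha}p) \gtrsim 1/p$, but actually one needs the sharper comparison $\zeta_\alpha^2(\bme_a) \gg 1/p$, which the lower bound $\|\bmalpha\|^4\gtrsim n_\alpha p$ in Condition~\ref{con_coneff_lower_a} is designed to secure; similarly $\zeta_{\alpha}(\bme_a)\to 0$ follows from $\sigma_{\alpha}^2=O(1)$, $p=O(n_{\alpha})$, and the variance formula.

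The crux is the first term, $\widehat{\sigma}_{\alpha}^2 - \widetilde{\sigma}_{\alpha}^2$, which captures the error from plugging in estimated LD scores $\widehat{\bme}_a$. I would expand this difference by comparing the two ratios: schematically $\widehat{\sigma}_{\alpha}^2 = (\widehat{\bme}_a^T\bmH\widehat{\bme}_a)^{-1}\widehat{\bme}_a^T\bmH\widehat{\w}_a$ versus $\widetilde{\sigma}_{\alpha}^2 = (\bme_a^T\bmH\bme_a)^{-1}\bme_a^T\bmH\widehat{\w}_a$, so the difference involves both the numerator perturbation $(\widehat{\bme}_a - \bme_a)^T\bmH\widehat{\w}_a$ and the denominator perturbation $\widehat{\bme}_a^T\bmH\widehat{\bme}_a - \bme_a^T\bmH\bme_a$. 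I would first replace $\widehat{\w}_a$ by $\w_a$ inside these perturbation terms (the residual $\widehat{\w}_a - \w_a$ contributes a higher-order cross term, controlled by Condition~\ref{con_pn_relation_alpha} and the concentration of $\widehat{\w}_a$ used in Theorem~\ref{thm_variance_dep}), reducing matters to controlling $(\widehat{\bme}_a - \bme_a)^T\bmH\w_a$ and the quadratic-form deviation of $\widehat{\bme}_a$. The former is exactly the object of Theorem~\ref{lTw_normality_a}: it is $O_p(\rho_a) + O(\|\bmH\w_a\|_1/n_{r\alpha})$ with $\rho_a\asymp\|\bmH\w_a\|/\sqrt{n_{r\alpha}}$, and since $\|\bmH\w_a\|\asymp\|\bmalpha\|^2$ (because $\w_a$ has entries $a_i^2 + \var(\widehat a_i)$ with $a_i=\bmSigma_{\alpha,i}\bmalpha$ and the $\var$ part is a near-constant shift killed by $\bmH$), dividing by $(\bme_a^T\bmH\bme_a)\zeta_{\alpha}(\bme_a)\asymp p\cdot\|\bmalpha\|^2/\sqrt{n_\alpha p}$ shows this contributes $O_p(\sqrt{n_\alpha}/(n_{r\alpha}\sqrt{p})) = o_p(1)$ precisely under Condition~\ref{con_nr_lower_a} ($n_{r\alpha}\gg\max(\sqrt p,\sqrt{n_\alpha})$). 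For the denominator, Lemma~\ref{lTl_normality_a} gives $\widehat{\bme}_a^T\widehat{\bme}_a - \bme_a^T\bme_a = O_p(\sqrt{p/n_{r\alpha}}) + O(p/n_{r\alpha})$, and one checks this is $o_p(p) = o_p(\bme_a^T\bmH\bme_a)$ so the denominator ratio is $1 + o_p(1)$; the centering by $\bmH$ needs the analogous statement for $\widehat\mu_{\bme_a}$, handled the same way.

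The remaining subtlety, and what I expect to be the main obstacle, is the genuinely cross-term contribution that couples the LD-score estimation error with the GWAS noise, namely terms of the form $(\widehat{\bme}_a - \bme_a)^T\bmH(\widehat{\w}_a - \w_a)$, together with showing that $\widehat{\sigma}_{\alpha}^2 - \widetilde{\sigma}_{\alpha}^2 = o_p(\zeta_\alpha(\bme_a))$ with the right rate rather than merely $o_p(1)$. Here the independence of the reference panel $\Z_\alpha$ from the GWAS data $(\X_\alpha,\y_\alpha)$ is essential: conditioning on $\y_\alpha$, the factor $\widehat{\bme}_a - \bme_a$ has mean (near) zero and variance $O(p/n_{r\alpha})$ by Lemma~\ref{lTl_normality_a}, while $\widehat{\w}_a - \w_a$ has the concentration properties established within the proof of Theorem~\ref{thm_variance_dep}; multiplying the two rates and invoking Conditions~\ref{con_pn_relation_alpha},~\ref{con_alpha_4norm}, and~\ref{con_nr_lower_a} should give a bound $o_p(\|\bmalpha\|^2/\sqrt{n_\alpha p})$. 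Condition~\ref{con_alpha_4norm}, $n_\alpha\|\bmalpha\|_4^4/(n_{r\alpha}\|\bmalpha\|_2^4) = o(1)$, is exactly what is needed to control the fourth-moment contribution of the GWAS noise against the reference-panel error in this cross term—this is flagged in the text as equation~(28) of the Supplement. Once all three terms are shown to be, respectively, asymptotically normal, $o_p(\zeta_\alpha(\bme_a))$, and $o(\zeta_\alpha(\bme_a))$, Slutsky's theorem yields the stated convergence, and $\zeta_\alpha(\bme_a)\to 0$ is read off from the variance formula as noted above.
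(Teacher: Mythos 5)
Your three-term split (oracle CLT term $\widetilde{\sigma}_{\alpha}^2-E\widetilde{\sigma}_{\alpha}^2$, plug-in error $\widehat{\sigma}_{\alpha}^2-\widetilde{\sigma}_{\alpha}^2$, deterministic bias) is a legitimate variant of the paper's argument, but it is not the paper's: the paper writes $\widehat{\sigma}_{\alpha}^2-\sigma_{\alpha}^2$ as $(\widehat{\bme}_{a}^T\bmH\widehat{\bme}_{a})^{-1}\widehat{\bme}_{a}^T\bmH(\widehat{\w}_{a}-\w_{a})$ plus $\{(\widehat{\bme}_{a}^T\bmH\widehat{\bme}_{a})^{-1}\widehat{\bme}_{a}^T-(\bme_{a}^T\bmH\bme_{a})^{-1}\bme_{a}^T\}\bmH\w_{a}$ plus $(\bme_{a}^T\bmH\bme_{a})^{-1}\bme_{a}^T\bmH\bm{\varepsilon}_a$, keeping the \emph{estimated} LD scores inside the asymptotically normal term and invoking Theorem~\ref{thm_variance_dep} with $\bmH\widehat{\bme}_{a}$ in place of $\bmH\bme_{a}$ (Remark~\ref{thm_variance_dep_remark}). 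That choice avoids the cross term $(\widehat{\bme}_{a}-\bme_{a})^T\bmH(\widehat{\w}_{a}-\w_{a})$ altogether. In your route that cross term is actually the easy part, not the crux: conditioning on the GWAS data it is $O_p(\|\bmalpha\|^2/\sqrt{n_{\alpha}n_{r\alpha}})$, which is smaller than $(\bme_{a}^T\bmH\bme_{a})\,\zeta_{\alpha}(\bme_{a})$ by a factor $n_{r\alpha}^{-1/2}$, with no fourth-moment condition needed.

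The genuine gap is in your treatment of $(\widehat{\bme}_{a}-\bme_{a})^T\bmH\w_{a}$, and it comes from two rate errors. First, plugging $\sigma^2_{\epsilon_{\alpha}}\asymp\|\bmalpha\|^2$, $\tr(\bmSigma_{\alpha}\bm{D}_{\alpha}\bmSigma_{\alpha}\bm{D}_{\alpha})\asymp p$, and $\bme_{a}^T\bmH\bme_{a}\asymp p$ into the variance formula of Theorem~\ref{thm_variance_dep} gives $\zeta^2_{\alpha}(\bme_{a})\asymp\|\bmalpha\|^4(n_{\alpha}p^2)^{-1}+\|\bmalpha\|^4(n_{\alpha}^2p)^{-1}\asymp\|\bmalpha\|^4/(n_{\alpha}p^2)$ under $p=O(n_{\alpha})$, not $\|\bmalpha\|^4/(n_{\alpha}p)$ as you assert; your $\zeta_{\alpha}$ is too large by $\sqrt{p}$ (with the corrected rate, Condition~\ref{con_coneff_lower_a} is exactly what makes $\zeta^2_{\alpha}\gtrsim 1/p$, consistent with your later remark). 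Second, $\|\bmH\w_{a}\|$ is not $\asymp\|\bmalpha\|^2$: after centering, the entries of $\bmH\w_{a}$ are essentially $a_j^2-\overline{a^2}$ with $a_j=\bmSigma_{\alpha,j}\bmalpha$, so $\|\bmH\w_{a}\|\lesssim\|\bmSigma_{\alpha}\bmalpha\|_4^2$, which can be smaller than $\|\bmalpha\|^2$ by up to a factor $p^{-1/2}$, and this sharper bound must be used. With the corrected $\zeta_{\alpha}$, Theorem~\ref{lTw_normality_a} shows this term contributes, after normalization, an amount of order $\sqrt{n_{\alpha}/n_{r\alpha}}\cdot\|\bmH\w_{a}\|/\|\bmalpha\|^2$; using only $\|\bmH\w_{a}\|\lesssim\|\bmalpha\|^2$ this is $\sqrt{n_{\alpha}/n_{r\alpha}}$, which Condition~\ref{con_nr_lower_a} ($n_{r\alpha}\gg\sqrt{n_{\alpha}}$) does \emph{not} send to zero, so your claim that it is $o_p(1)$ "precisely under Condition~\ref{con_nr_lower_a}" fails. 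The term vanishes only after identifying $\|\bmH\w_{a}\|^2\lesssim\|\bmalpha\|_4^4$ (up to the bounded block size $q_b$) and invoking Condition~\ref{con_alpha_4norm}, whose left-hand side $n_{\alpha}\|\bmalpha\|_4^4/(n_{r\alpha}\|\bmalpha\|_2^4)$ is exactly this squared ratio—this is the role of that condition (the paper's equation (28) in the Supplement), not the GWAS-noise/reference-panel cross term to which you assign it. Once $\zeta_{\alpha}$ and $\|\bmH\w_{a}\|$ are corrected and Condition~\ref{con_alpha_4norm} is redeployed at this step, your decomposition can be pushed through.
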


\begin{proof}[Proof sketch of Theorem~\ref{thm_clt_var}]
We rewrite the standardized $\widehat{\sigma}_{\alpha}^2$ as follows:
{\fxintercept
\begin{align*}
\frac{\widehat{\sigma}_{\alpha}^2 - \sigma_{\alpha}^2}{\zeta_{\alpha} (\bme_{a})} 
=& \frac{(\widehat{\bme}_{a}^T \bmH \widehat{\bme}_{a})^{-1}\widehat{\bme}_{a}^T \bmH \widehat{\w}_{a} -  (\bme_{a}^T \bmH \bme_{a})^{-1}\bme_{a}^T \bmH \w_{a}}{\zeta_{\alpha} (\bme_{a})}  
+ \frac{(\bme_{a}^T \bmH \bme_{a})^{-1} \bme_{a}^T \bmH \bm{\varepsilon}_a}{\zeta_{\alpha} (\bme_{a})} \\
=& \frac{(\widehat{\bme}_{a}^T \bmH \widehat{\bme}_{a})^{-1} \widehat{\bme}_{a}^T \bmH (\widehat{\w}_{a} -\w_{a})}{\zeta_{\alpha} (\bme_{a})}   
+ \frac{\{(\widehat{\bme}_{a}^T \bmH \widehat{\bme}_{a})^{-1}
\widehat{\bme}_{a}^T -  (\bme_{a}^T \bmH \bme_{a})^{-1}\bme_{a}^T\} \bmH \w_{a}}{\zeta_{\alpha} (\bme_{a})}\\
& + \frac{(\bme_{a}^T \bmH \bme_{a})^{-1}\bme_{a}^T \bmH \bm{\varepsilon}_a}{\zeta_{\alpha} (\bme_{a})}.
\end{align*}
}Then it suffices to show that 
{\fxintercept
\begin{align}\label{step1_a}
\frac{(\bme_{a}^T \bmH \bme_{a})^{-1}\bme_{a}^T \bmH \bm{\varepsilon}_a}{\zeta_{\alpha} (\bme_{a})} \to 0,
\end{align}
\begin{align}\label{step2_a}
\frac{\{(\widehat{\bme}_{a}^T \bmH \widehat{\bme}_{a})^{-1}
\widehat{\bme}_{a}^T -  (\bme_{a}^T \bmH \bme_{a})^{-1}\bme_{a}^T\} \bmH \w_{a}}{\zeta_{\alpha} (\bme_{a})} \overset{p}{\to} 0,
\end{align}
and 
\begin{align}\label{step3_a}
\frac{(\widehat{\bme}_{a}^T \bmH \widehat{\bme}_{a})^{-1} \widehat{\bme}_{a}^T \bmH (\widehat{\w}_{a} -\w_{a})}{\zeta_{\alpha} (\bme_{a})}   \overset{d}{\to} N(0,1).
\end{align}
}

It is important to note that the lower and upper bounds of $\zeta_{\alpha} (\bme_{a})$ are both increasing functions of the aggregated genetic effects $\|\bmalpha\|$. Consequently, in order to derive \eqref{step1_a}, we require an upper bound on
{\fxintercept
$|\bme_{a}^T \bmH \bm{\varepsilon}_a|$}
and a lower bound on $\|\bmalpha\|$, which can be obtained from Conditions \ref{sigma_con_a} and \ref{con_coneff_lower_a}. Furthermore, the convergence of $\zeta_{\alpha} (\bme_{a})$ to $0$ can be deduced from Condition \ref{sigma_con_a}, which restricts the upper bound of $\|\bmalpha\|$. The application of Theorems \ref{lTw_normality_a} and \ref{thm_variance_dep} allows us to derive (\ref{step2_a}) and (\ref{step3_a}), respectively. Here we show $\zeta_{\alpha} (\bme_{a})\to 0$ to ensure the convergence of $\widehat{\sigma}_{\alpha}^2$ to $\sigma_{\alpha}^2$. Detailed proof of Theorem~\ref{thm_clt_var} can be found in Section S6.3 of the Supplementary Material \citep{supplement}.
\end{proof}

One of the key conditions of LDSC is the block-diagonal LD pattern in high-dimensional GWAS data. 
On the one hand, the relatedness of genetic variants establishes the linear trend between LD scores and the marginal genetic effects, while the slope between them happens to be {\bxzz $\sigma_{\alpha}^2$}. This enables the estimation of {\bxzz $\sigma_{\alpha}^2$} from the regression between the $p$ pairs of {\bxzz estimated LD scores and the GWAS summary data.} 
On the other hand, according to the block-diagonal LD pattern, only nearby genetic variants can be highly correlated, guaranteeing reliable estimation of LD scores even when the dimensionality of GWAS data is high.

More importantly, when establishing the asymptotic normality of $\widehat{\sigma}_{\alpha}^2$, we take into account the genome-wide dependence structure among GWAS summary statistics. This differs from previous studies on high-dimensional inference in GWAS data \citep{zhao2020statistical,ye2021debiased}, which often assume independence among their genetic effect estimates.

In addition to the typical LDSC-based estimators applied on a genome-wide scale, our results also provide insights when LDSC is used on a smaller subset of genetic variants, such as in stratified heritability enrichment analysis \citep{finucane2015partitioning,gazal2017linkage,finucane2018heritability}. In such LDSC applications, the number of involved genetic variants $p$ is typically much smaller compared to LDSC on a genome-wide scale \citep{bulik2015ld,bulik2015atlas}. Consequently, satisfying our conditions, especially Conditions~\ref{sigma_con_a} and \ref{con_coneff_lower_a}, becomes more challenging, and the convergence of the asymptotic normality may be slower or may not hold.
{\bxzzz We provide more discussions with real functional annotation data in the Discussion Section.}

We have explicitly quantified the asymptotic variance of $\widehat{\sigma}_{\alpha}^2$ in Theorem~\ref{thm_clt_var}, uncovering how the model parameters and dataset characteristics determine the uncertainty of estimation.
In practice, as most of these factors are unknown, LDSC uses the block jackknife to estimate variance and perform resampling-based statistical inference \citep{bulik2015atlas,bulik2015ld}. 

Furthermore, we establish the consistency of $\widehat{\sigma}_{\alpha}^2$ 
{\fxintercept with weaker conditions in Section S2 of the Supplementary Material.}

\section{Bivariate LDSC}\label{sec4}
In this section, we investigate the theoretical properties of $\widehat{\sigma}_{\alpha\beta}=(\widehat{\bme}_{ab}^T\widehat{\bme}_{ab})^{-1}\widehat{\bme}_{ab}^T\widehat{\w}_{ab}$, 
which is the estimator of $\sigma_{\alpha\beta}^2$ in bivariate LDSC. 
To extend LDSC to cross-ancestry applications, we {\bxzzz use} cross-ancestry LD scores $\bme_{ab}$ instead of within-ancestry LD scores $\bme_{a}$. 
Similar to Section \ref{sec3}, we begin by establishing the asymptotic normality of $\widehat{\bme}_{ab}^T\w_{ab}$.
We need the following regularity condition on $\w_{ab}$.

\begin{condition}\label{con_converge_rate}
We  assume that 
$$\frac{\|\w_{ab}\|_3}{\|\w_{ab}\|_2}\to 0$$
as $p \to \infty$. Here we make the convention that $\|\w_{ab}\|_3/\|\w_{ab}\|_2=0$ if $\w_{ab}=\bm{0}$.
\end{condition}
Similar to Condition~\ref{con_converge_rate_a}, Condition~\ref{con_converge_rate} can easily be satisfied for two polygenic traits when there is a diverging number (as $p$ goes to infinity) of genetic variants with nonzero comparable marginal effects, collectively influencing both traits \citep{timpson2018genetic}.
Based on Conditions \ref{con1} and \ref{con_converge_rate}, we have the following theorem. 
\begin{thm}\label{lTw_normality}
Under Conditions \ref{con1} and \ref{con_converge_rate}, we have
\begin{align*}
\frac{(\widehat{\bme}_{ab}-E\widehat{\bme}_{ab})^T\w_{ab}}{\rho_{ab}} \overset{d}{\to} N(0,1)
\end{align*}
{as $\min(n_{r\alpha},n_{r\beta},p)\to \infty$},
where 
\begin{align*}
\rho_{ab}^2=\var (\widehat{\bme}_{ab}^T\w_{ab})
=&\sum_{m=1}^{p_b} \sum_{j, k\in \mathcal{N}(m)} \w_{ab, j} \w_{ab, k} \left[n_{r\alpha}^{-1}n_{r\beta}^{-1} \cdot \{ (\bmSigma_{\alpha} \bmSigma_{\beta})_{kk} (\bmSigma_{\alpha} \bmSigma_{\beta})_{jj} \right.\\
& + \bmSigma_{\alpha, kj} \bmSigma_{\beta, kj} \tr (\bmSigma_{\alpha, \mathcal{N}(m) \mathcal{N}(m)} \bmSigma_{\beta, \mathcal{N}(m) \mathcal{N}(m)})  \} \\
&  + \frac{1 + n_{r\beta}}{n_{r\alpha}n_{r\beta}} \cdot\bmSigma_{\alpha, kj}(\bmSigma_{\beta}\bmSigma_{\alpha} \bmSigma_{\beta})_{kj}
+ \frac{1 + n_{r\alpha}}{n_{r\alpha}n_{r\beta}} \cdot\bmSigma_{\beta, kj}(\bmSigma_{\alpha} \bmSigma_{\beta} \bmSigma_{\alpha})_{kj}\\ 
& \left. + \frac{n_{r\alpha} + n_{r\beta}}{n_{r\alpha}n_{r\beta}} \cdot(\bmSigma_{\alpha} \bmSigma_{\beta})_{kj}(\bmSigma_{\beta}\bmSigma_{\alpha})_{kj} \right].
\end{align*}
Here 
$(\bmSigma_{\alpha} \bmSigma_{\beta})_{kj}$ and $(\bmSigma_{\alpha} \bmSigma_{\beta} \bmSigma_{\alpha})_{kj}$ represent the $(k,j)$ elements of $\bmSigma_{\alpha} \bmSigma_{\beta}$ and $\bmSigma_{\alpha} \bmSigma_{\beta} \bmSigma_{\alpha}$, respectively, 
and $\bmSigma_{\alpha, \mathcal{N}(m) \mathcal{N}(m)}$ and $\bmSigma_{\beta, \mathcal{N}(m) \mathcal{N}(m)}$ represent sub-matrices of $\bmSigma_{\alpha}$ and $\bmSigma_{\beta}$, respectively, indexed by $\mathcal{N}(m)$.
Moreover, we have
\begin{align*}
\rho_{ab} \asymp \sqrt{\frac{n_{r\alpha} + n_{r\beta}}{n_{r\alpha}n_{r\beta}}  }\cdot \|\w_{ab}\|.
\end{align*}
\end{thm}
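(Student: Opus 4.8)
The plan is to exploit the block-diagonal structure of $\bmSigma_{\alpha}$ and $\bmSigma_{\beta}$ to write $\widehat{\bme}_{ab}^T\w_{ab}=\sum_{m=1}^{p_b}T_m$ with $T_m=\sum_{j\in\mathcal{N}(m)}\w_{ab,j}\widehat{\ell}_{ab,j}$, and then to run a Lyapunov central limit theorem for the independent array $\{T_m-ET_m\}_{m=1}^{p_b}$ in tandem with an exact evaluation of $\rho_{ab}^2=\sum_m\var(T_m)$. Since $\bmSigma_{\alpha}^{1/2}$ and $\bmSigma_{\beta}^{1/2}$ are block-diagonal (Condition~\ref{con1}(a)--(b)), $T_m$ is a function only of the columns of the underlying i.i.d.\ Gaussian matrices $\Z_{\alpha_0},\Z_{\beta_0}$ indexed by $\mathcal{N}(m)$, and because these column blocks are mutually independent the $T_m$ are independent (this is what Lemma~S1 in the Supplementary Material records). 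Moreover $p_b\to\infty$ since $p\to\infty$ with block sizes bounded by $q_b$, so the array is genuinely growing.

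First I would compute $\var(T_m)$ exactly. Expanding $\cov(\widehat{\ell}_{ab,j},\widehat{\ell}_{ab,k})$ for $j,k\in\mathcal{N}(m)$, using the mutual independence of the two reference panels to factorize the relevant fourth moment into a product of second moments of sample-covariance entries, and applying the Gaussian identity $\cov(\widehat{\bmSigma}_{\alpha,ji},\widehat{\bmSigma}_{\alpha,ki'})=n_{r\alpha}^{-1}(\bmSigma_{\alpha,jk}\bmSigma_{\alpha,ii'}+\bmSigma_{\alpha,ji'}\bmSigma_{\alpha,ik})$ together with its $\beta$-analogue, one collapses the resulting index sums over $\mathcal{N}(m)$ into matrix products. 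This is exactly where restricting every sum to a single block is essential, e.g.\ $\sum_{i\in\mathcal{N}(m)}\bmSigma_{\alpha,ji}\bmSigma_{\beta,ik}=(\bmSigma_{\alpha}\bmSigma_{\beta})_{jk}$ and $\sum_{i,i'\in\mathcal{N}(m)}\bmSigma_{\alpha,ii'}\bmSigma_{\beta,ii'}=\tr(\bmSigma_{\alpha,\mathcal{N}(m)\mathcal{N}(m)}\bmSigma_{\beta,\mathcal{N}(m)\mathcal{N}(m)})$; recombining the $n_{r\alpha}^{-1}$, $n_{r\beta}^{-1}$, and $n_{r\alpha}^{-1}n_{r\beta}^{-1}$ pieces then produces the five families of terms in the statement, with the coefficients $(1+n_{r\beta})/(n_{r\alpha}n_{r\beta})$, $(1+n_{r\alpha})/(n_{r\alpha}n_{r\beta})$, $(n_{r\alpha}+n_{r\beta})/(n_{r\alpha}n_{r\beta})$ emerging from the combination. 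Summing over $m$ gives $\rho_{ab}^2$.

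Next I would pin down the order $\rho_{ab}\asymp\sqrt{(n_{r\alpha}+n_{r\beta})/(n_{r\alpha}n_{r\beta})}\,\|\w_{ab}\|$. The upper bound $\var(T_m)\lesssim(n_{r\alpha}^{-1}+n_{r\beta}^{-1})\|\w_{ab,\mathcal{N}(m)}\|^2$ is read off the explicit formula using the uniformly bounded eigenvalues of $\bmSigma_{\alpha},\bmSigma_{\beta}$ (Condition~\ref{con1}(a)), the bounded block size $q_b$ (Condition~\ref{con1}(b)), and $|\w_{ab,j}\w_{ab,k}|\le(\w_{ab,j}^2+\w_{ab,k}^2)/2$. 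The matching lower bound is the delicate point, since the leading quadratic-form terms of the variance formula involve the non-symmetric matrix $\bmSigma_{\alpha}\bmSigma_{\beta}$ in the cross-ancestry regime and need not be individually nonnegative. Here I would decompose $T_m-ET_m$ into Wiener chaoses and keep only the second-chaos part $U_m=U_m^{(\alpha)}+U_m^{(\beta)}$ built from the linear parts of $\widehat{\bmSigma}_{\alpha}-\bmSigma_{\alpha}$ and $\widehat{\bmSigma}_{\beta}-\bmSigma_{\beta}$; $U_m^{(\alpha)}$ and $U_m^{(\beta)}$ are independent and orthogonal to the higher-chaos remainder, so $\var(T_m)\ge\var(U_m^{(\alpha)})+\var(U_m^{(\beta)})$. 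Writing $U_m^{(\alpha)}$ as $n_{r\alpha}^{-1}$ times a centered sum of i.i.d.\ Gaussian quadratic forms $\x_l^T W_m\bmSigma_{\beta,\mathcal{N}(m)\mathcal{N}(m)}\x_l$, with $W_m=\text{diag}(\w_{ab,\mathcal{N}(m)})$ and $\x_l$ the rows of $\Z_{\alpha}$ restricted to $\mathcal{N}(m)$, its variance equals $2n_{r\alpha}^{-1}\|A_m^{1/2}\widetilde M_m A_m^{1/2}\|_{\mathrm{HS}}^2$ for the symmetrized matrix $\widetilde M_m=(W_m\bmSigma_{\beta,\mathcal{N}(m)\mathcal{N}(m)}+\bmSigma_{\beta,\mathcal{N}(m)\mathcal{N}(m)}W_m)/2$; expanding $\|2\widetilde M_m\|_{\mathrm{HS}}^2$ into $2\tr((W_m\bmSigma_{\beta,\mathcal{N}(m)\mathcal{N}(m)})^2)+2\tr(\bmSigma_{\beta,\mathcal{N}(m)\mathcal{N}(m)}^2W_m^2)$, noting that the first term is itself a squared Hilbert--Schmidt norm hence nonnegative and that the second is $\ge c\,\|\w_{ab,\mathcal{N}(m)}\|^2$ by the eigenvalue bound, yields $\var(U_m^{(\alpha)})\gtrsim n_{r\alpha}^{-1}\|\w_{ab,\mathcal{N}(m)}\|^2$, and symmetrically for $\beta$. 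Summing over $m$ gives $\rho_{ab}^2\asymp(n_{r\alpha}^{-1}+n_{r\beta}^{-1})\|\w_{ab}\|^2$.

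Finally, for the CLT I would check Lyapunov's condition with exponent $3$. Each $T_m-ET_m$ is a polynomial of degree at most $4$ in the Gaussian entries of block $m$, so Gaussian hypercontractivity gives $E|T_m-ET_m|^3\le C(\var T_m)^{3/2}$ with a universal constant, whence $\sum_m E|T_m-ET_m|^3/\rho_{ab}^3\lesssim\sum_m\|\w_{ab,\mathcal{N}(m)}\|_2^3/\|\w_{ab}\|_2^3$; since each block has at most $q_b$ coordinates, the power-mean inequality gives $\|\w_{ab,\mathcal{N}(m)}\|_2^3\le q_b^{1/2}\sum_{j\in\mathcal{N}(m)}|\w_{ab,j}|^3$, so the right-hand side is $\lesssim(\|\w_{ab}\|_3/\|\w_{ab}\|_2)^3\to0$ by Condition~\ref{con_converge_rate}. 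Lyapunov's theorem then delivers $(\widehat{\bme}_{ab}-E\widehat{\bme}_{ab})^T\w_{ab}/\rho_{ab}\overset{d}{\to}N(0,1)$. The step I expect to be the main obstacle is the lower bound on $\rho_{ab}$ — ruling out asymptotic degeneracy of the per-block covariance of the estimated LD scores in the cross-ancestry setting — for which the Wiener-chaos reduction above is what makes a clean, sign-definite estimate possible; the other labor-intensive step is the combinatorial bookkeeping behind the five-term variance formula.
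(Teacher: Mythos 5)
Your proposal is correct and follows essentially the same route the paper takes: decompose $\widehat{\bme}_{ab}^T\w_{ab}$ into independent block sums (the independence recorded in Lemma S1), compute the block variances exactly via the Gaussian/Wishart covariance identity to obtain the stated five-term formula, bound $\rho_{ab}$ above and below to get $\rho_{ab}\asymp\{(n_{r\alpha}+n_{r\beta})/(n_{r\alpha}n_{r\beta})\}^{1/2}\|\w_{ab}\|$, and verify a Lyapunov condition whose ratio is controlled exactly by Condition~\ref{con_converge_rate} together with the bounded block size $q_b$. Your implementation details (the chaos-type orthogonal decomposition $T_m-ET_m=U_m^{(\alpha)}+U_m^{(\beta)}+R_m$ for the variance lower bound, and hypercontractivity for the third moments) are sound and differ from the paper's bookkeeping only cosmetically.
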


Next, we show the asymptotic results of the intermediate oracle LDSC estimator $\widetilde{\sigma}_{\alpha\beta}=(\bme_{ab}^T\bme_{ab})^{-1}\bme_{ab}^T\widehat{\w}_{ab}$. 
The following regularity conditions are similar to Conditions \ref{con_pn_relation_alpha} and \ref{con_condeff_lower_a} in Section \ref{sec3}. Specifically, Condition~\ref{con_pn_relation} imposes constraints on sample sizes $n_\alpha$ and $n_\beta$ of the two GWAS datasets, and Condition~\ref{con_condeff_lower} controls the lower bounds of the strength of genetic effects represented by $\|\bmalpha\|$ and $\|\bmbeta\|$.

\begin{condition}\label{con_pn_relation}
We assume that $p=O(n_{\alpha})$ and $p=O(n_{\beta})$.
\end{condition}
\begin{condition}\label{con_condeff_lower}
$\min(n_{\alpha}^{1/2}, n_{\beta}^{1/2}) \cdot \min(\|\bmalpha\|, 1) \cdot  \min( \|\bmbeta\|, 1) \to \infty$. 
\end{condition}

The following theorem establishes the asymptotic normality for $\widetilde{\sigma}_{\alpha\beta}$ and provides the explicit form of its variance.

\begin{thm}\label{thm_covariance_dep}
Under Conditions \ref{con1}, \ref{con_lmin_lower_2}, \ref{con_pn_relation}, and \ref{con_condeff_lower}, 
we have
\begin{equation*}
\frac{\widetilde{\sigma}_{\alpha\beta} - E\widetilde{\sigma}_{\alpha\beta}}{\zeta_{\alpha\beta} (\bme_{ab})} \overset{d}{\to} N(0,1)
\end{equation*}
{as $\min(n_{\alpha},n_{\beta},p)\to \infty$},
where $\zeta^2_{\alpha\beta}(\bme_{ab}) = \var(\widetilde{\sigma}_{\alpha\beta})$ and $\bm{D}_{\alpha\beta}=\text{diag}(\ell_{ab,1}, \ldots, \ell_{ab,p})$. In addition, we have
 \begin{align*}
 \zeta^2_{\alpha\beta} (\bme_{ab})=&  \frac{1}{(\bme_{ab}^T\bme_{ab})^{2} n_{\alpha}n_{\beta}} 
 \left[  (n_{\beta}+1) \bmalpha^T \bm{\Sigma}_{\alpha} \bmalpha \cdot \bmbeta^T \bm{\Sigma}_{\beta} \bm{D}_{\alpha\beta}  \bm{\Sigma}_{\alpha} \bm{D}_{\alpha\beta} \bm{\Sigma}_{\beta} \bmbeta  \right.\\
 & +\sigma^2_{\epsilon_{\beta}} \cdot \left\{
  \tr(\bm{D}_{\alpha\beta} \bm{\Sigma}_{\alpha} \bm{D}_{\alpha\beta} \bm{\Sigma}_{\beta} ) \cdot \bmalpha^T \bm{\Sigma}_{\alpha} \bmalpha 
  + (n_{\alpha}+1) \cdot \bmalpha^T \bm{\Sigma}_{\alpha}
 \bm{D}_{\alpha\beta} \bm{\Sigma}_{\beta} \bm{D}_{\alpha\beta} \bm{\Sigma}_{\alpha} \bmalpha \right\} \\
 & + \sigma^2_{\epsilon_{\alpha}}\cdot \left\{ \tr( \bm{D}_{\alpha\beta} \bm{\Sigma}_{\alpha} \bm{D}_{\alpha\beta} \bm{\Sigma}_{\beta}) \cdot \bmbeta^T \bm{\Sigma}_{\beta} \bmbeta 
 + (n_{\beta}+1) \cdot 
 \bmbeta^T \bm{\Sigma}_{\beta}
 \bm{D}_{\alpha\beta} \bm{\Sigma}_{\alpha} \bm{D}_{\alpha\beta} \bm{\Sigma}_{\beta} \bmbeta \right\}\\
 &  + \sigma^2_{\epsilon_{\beta}}  \sigma^2_{\epsilon_{\alpha}}\cdot
  \tr( \bm{D}_{\alpha\beta} \bm{\Sigma}_{\beta} \bm{D}_{\alpha\beta} \bm{\Sigma}_{\alpha}) 
  + (n_{\alpha}  + n_{\beta} +1) \cdot 
\left( \bmalpha^T \bm{\Sigma}_{\alpha} \bm{D}_{\alpha\beta} \bm{\Sigma}_{\beta} \bmbeta\right)^2 \\
& 
+ (n_{\alpha}+1) 
\cdot \bmbeta^T \bm{\Sigma}_{\beta} \bmbeta 
\cdot 
\bmalpha^T \bm{\Sigma}_{\alpha} \bm{D}_{\alpha\beta} \bm{\Sigma}_{\beta} \bm{D}_{\alpha\beta} \bm{\Sigma}_{\alpha} \bmalpha\\
& \left.
   +   \bmalpha^T \bm{\Sigma}_{\alpha} \bmalpha \cdot \bmbeta^T \bm{\Sigma}_{\beta} \bmbeta  \cdot \tr (\bm{\Sigma}_{\alpha}^{1/2} \bm{D}_{\alpha\beta}  \bm{\Sigma}_{\beta} \bm{D}_{\alpha\beta} \bm{\Sigma}_{\alpha}^{1/2})
 \right].
 \end{align*}
\end{thm}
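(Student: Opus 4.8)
The plan is to establish the central limit theorem for $\widetilde{\sigma}_{\alpha\beta}$ by mirroring the structure of the proof of Theorem~\ref{thm_variance_dep}, since the oracle estimator $\widetilde{\sigma}_{\alpha\beta}=(\bme_{ab}^T\bme_{ab})^{-1}\bme_{ab}^T\widehat{\w}_{ab}$ is a fixed linear functional of the random vector $\widehat{\w}_{ab}$ whose entries are $\widehat{a}_j\widehat{b}_j$. First I would write $\widehat{a}_j=n_\alpha^{-1}\X_{\alpha,j}^T(\X_\alpha\bmalpha+\bmeps_\alpha)$ and $\widehat{b}_j=n_\beta^{-1}\X_{\beta,j}^T(\X_\beta\bmbeta+\bmeps_\beta)$ and expand, using Condition~\ref{con1}(a) to replace $\X_\alpha=\X_{\alpha_0}\bmSigma_\alpha^{1/2}$ (and similarly for the others), so that $\bme_{ab}^T\widehat{\w}_{ab}$ becomes a polynomial of degree four in the i.i.d.\ Gaussian entries of $\X_{\alpha_0},\X_{\beta_0},\bmeps_\alpha,\bmeps_\beta$. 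Because GWAS-I and GWAS-II are independent, the cross terms factor, and the vector $\bm{D}_{\alpha\beta}=\text{diag}(\ell_{ab,1},\dots,\ell_{ab,p})$ arises naturally as $\bme_{ab}$ gets absorbed into the quadratic forms. I would then center at $E\widetilde{\sigma}_{\alpha\beta}$ and organize the fluctuation into a sum of terms that are linear/quadratic in each independent Gaussian block.

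For the asymptotic normality itself, I would invoke the same machinery used for Theorem~\ref{thm_variance_dep}: the total-variation CLT for functions of dependent/Gaussian variables from \citet{chatterjee2009fluctuations}, applied after conditioning appropriately to handle the products of independent Gaussian quadratic forms. Concretely, conditioning on the GWAS-II data, $\bme_{ab}^T\widehat{\w}_{ab}$ is (up to the conditioning randomness) a quadratic-plus-linear form in the GWAS-I Gaussians, to which a second-order Poincaré / Gaussian CLT applies; the conditional mean and variance are themselves smooth functionals of the GWAS-II Gaussians, and one iterates. The norming constant $\zeta_{\alpha\beta}(\bme_{ab})$ is identified via the generalized variance-of-quadratic-forms lemma (the extension of \citet{dicker2014variance}), which yields the explicit formula in the statement: each summand corresponds to one of the Wick-pairing patterns among $\X_{\alpha_0},\X_{\beta_0},\bmeps_\alpha,\bmeps_\beta$ — the pure genetic term $(n_\alpha+n_\beta+1)(\bmalpha^T\bmSigma_\alpha\bm{D}_{\alpha\beta}\bmSigma_\beta\bmbeta)^2$, the four genetic$\times$noise cross terms carrying $\sigma^2_{\epsilon_\alpha}$ or $\sigma^2_{\epsilon_\beta}$, and the pure-noise trace term $\sigma^2_{\epsilon_\alpha}\sigma^2_{\epsilon_\beta}\tr(\bm{D}_{\alpha\beta}\bmSigma_\beta\bm{D}_{\alpha\beta}\bmSigma_\alpha)$. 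Conditions~\ref{con_pn_relation} ($p=O(n_\alpha)$, $p=O(n_\beta)$) and \ref{con_condeff_lower} (lower bound on $\|\bmalpha\|,\|\bmbeta\|$) are exactly what is needed to show the third-moment (or fourth-cumulant) remainder in the CLT is negligible relative to $\zeta_{\alpha\beta}^2(\bme_{ab})$: the former controls the error from replacing finite-sample quadratic forms by their expectations, and the latter keeps the signal from vanishing so that the normalization does not degenerate. Condition~\ref{con_lmin_lower_2} ($\bme_{ab,\min}\ge c$) ensures $\bme_{ab}^T\bme_{ab}\asymp p$, so the leading $(\bme_{ab}^T\bme_{ab})^{-2}$ factor is well-behaved.

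The main obstacle, as in the univariate case, is handling the genome-wide dependence: although the LD-score surrogate $\bm{D}_{\alpha\beta}$ has block structure, every entry of $\widehat{\w}_{ab}$ depends on the \emph{same} GWAS samples $\y_\alpha,\y_\beta$, so $\bme_{ab}^T\widehat{\w}_{ab}$ is a genuinely high-dimensional quadratic/quartic form with no independence across the $p$ "observations". This is where the \citet{chatterjee2009fluctuations} total-variation bound does the heavy lifting, and where the bulk of the technical work lies — verifying that the relevant operator norms and trace quantities (e.g.\ $\tr((\bmSigma_\alpha\bm{D}_{\alpha\beta}\bmSigma_\beta\bm{D}_{\alpha\beta})^2)$ and $\|\bmSigma_\alpha\bm{D}_{\alpha\beta}\bmSigma_\beta\bmbeta\|$) are of the right order under Condition~\ref{con1} so that the CLT error term is $o(\zeta_{\alpha\beta})$. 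A secondary, purely bookkeeping challenge is the combinatorial bloat: the fourth-order Gaussian expansion over two independent data sources produces many more Wick-pairing terms than in Theorem~\ref{thm_variance_dep}, and care is needed to collect them into the compact closed form stated, keeping track of the $(n_\alpha+1)$, $(n_\beta+1)$, and $(n_\alpha+n_\beta+1)$ combinatorial prefactors that come from self-pairings within each sample. Once the explicit $\zeta_{\alpha\beta}^2(\bme_{ab})$ is in hand, Remark~\ref{thm_variance_dep_remark}'s observation extends verbatim, allowing $\bme_{ab}$ to later be replaced by $\widehat{\bme}_{ab}$ in the downstream proof for $\widehat{\sigma}_{\alpha\beta}$. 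Full details I would defer to the Supplementary Material.
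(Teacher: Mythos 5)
Your proposal takes essentially the same route as the paper: the paper likewise handles the genome-wide dependence in $\widehat{\w}_{ab}$ by expanding the quartic Gaussian forms and establishing normality via the total-variation convergence machinery of \cite{chatterjee2009fluctuations}, with a generalized version of the variance lemmas of \cite{dicker2014variance} delivering the explicit $\zeta^2_{\alpha\beta}(\bme_{ab})$, and with Conditions~\ref{con_lmin_lower_2}, \ref{con_pn_relation}, and \ref{con_condeff_lower} playing exactly the roles you assign them. Your Wick-pairing inventory is slightly incomplete as stated (the formula contains four purely genetic terms, not one), but since you explicitly flag the combinatorial bookkeeping and defer it to the detailed computation, this is a presentational rather than a substantive difference.
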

\begin{remark}\label{thm_covariance_dep_remark}
In addition to cross-ancestry LD scores $\bme_{ab}$, Theorem \ref{thm_covariance_dep} holds for any generic vectors as long as their elements have absolute values bounded by constants from both above and below. This allows us to extend the theoretical results to estimated LD scores $\widehat{\bme}_{ab}$ under appropriate conditions.
\end{remark}
One of the key challenges in proving Theorem~\ref{thm_covariance_dep} lies in addressing the dependence among the elements in $\widehat{\w}_{ab}$. Similar to the approach used in Theorem~\ref{thm_variance_dep}, we establish the asymptotic properties using the convergence in total variation for dependent random variables \citep{chatterjee2009fluctuations, dicker2014variance}. 
This allows us to effectively handle the high-dimensional dependence structure in GWAS summary data.
To establish the asymptotic normality for the LDSC estimator $\widehat{\sigma}_{\alpha\beta}$, 
we need the following additional conditions.

\begin{condition}\label{sigma_con}
We assume that
${\sigma_{\alpha\beta}=O(1)}$ and 
$\left|\bme_{ab}^T \bm{\varepsilon}_{ab}\right| = o(p^{1/2})$.
\end{condition}
\begin{condition}\label{con_coneff_lower}
We assume that 
$\|\bmalpha\| \|\bmbeta\|\gtrsim \sqrt{n_{\alpha} n_{\beta}}$.
\end{condition}
\begin{condition}\label{con_n_rate}
We assume that $\|\bmalpha\|\|\bmbeta\|=o(p^{3/2})$.
\end{condition}
\begin{condition}\label{con_margeff}
We assume that 
\begin{align*}
\frac{\|\w_{ab}\|^2}{p+n_{\alpha}+n_{\beta}} \left( \frac{1}{n_{r\alpha}} + \frac{1}{n_{r\beta}} \right) =o(1).
\end{align*}
\end{condition}
\begin{condition}\label{con_nr_rate_2}
We assume that {$\min(n_{r\alpha},n_{r\beta}) \gg \sqrt{p}$.} 
\end{condition}

{Similar to Condition~\ref{sigma_con_a},}
Condition~\ref{sigma_con} ensures the orthogonality between $\bme_{ab}$ and $\bm{\varepsilon}_{ab}$.
Conditions~\ref{con_coneff_lower} and \ref{con_n_rate} control the strength of the aggregated genetic signals.
Conditions~\ref{con_margeff} and \ref{con_nr_rate_2} impose constraints regarding the sample size of GWAS and reference panels.

\begin{thm}\label{thm_clt_cov}
Under Conditions \ref{con1}, 
\ref{con_lmin_lower},
and \ref{con_converge_rate}--\ref{con_nr_rate_2}, 
we have
\begin{align*}
\frac{\widehat{\sigma}_{\alpha\beta} - \sigma_{\alpha\beta}}{\zeta_{\alpha\beta} (\bme_{ab})}  \overset{d}{\to} N(0,1)
\end{align*}
and $\zeta_{\alpha\beta} (\bme_{ab})\to 0$
{as $\min(n_{\alpha},n_{r\alpha},n_{\beta},n_{r\beta},p)\to \infty$.}
\end{thm}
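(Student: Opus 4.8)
The plan is to mirror the decomposition used in the proof sketch of Theorem~\ref{thm_clt_var}, writing
\begin{align*}
\frac{\widehat{\sigma}_{\alpha\beta} - \sigma_{\alpha\beta}}{\zeta_{\alpha\beta}(\bme_{ab})}
=& \frac{(\widehat{\bme}_{ab}^T\widehat{\bme}_{ab})^{-1}\widehat{\bme}_{ab}^T(\widehat{\w}_{ab}-\w_{ab})}{\zeta_{\alpha\beta}(\bme_{ab})}
+ \frac{\{(\widehat{\bme}_{ab}^T\widehat{\bme}_{ab})^{-1}\widehat{\bme}_{ab}^T - (\bme_{ab}^T\bme_{ab})^{-1}\bme_{ab}^T\}\w_{ab}}{\zeta_{\alpha\beta}(\bme_{ab})}\\
& + \frac{(\bme_{ab}^T\bme_{ab})^{-1}\bme_{ab}^T\bm{\varepsilon}_{ab}}{\zeta_{\alpha\beta}(\bme_{ab})},
\end{align*}
using $\w_{ab}=\sigma_{\alpha\beta}\bme_{ab}+\bm{\varepsilon}_{ab}$ from~\eqref{equ_ols}. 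It then suffices to show the first term converges in distribution to $N(0,1)$, the second term is $o_p(1)$, and the third (deterministic bias) term is $o(1)$; these are the bivariate analogues of \eqref{step3_a}, \eqref{step2_a}, and \eqref{step1_a}. Throughout I would exploit that by Lemma~\ref{lTl_normality} and Condition~\ref{con_nr_rate_2}, $\widehat{\bme}_{ab}^T\widehat{\bme}_{ab}/p \to \bme_{ab}^T\bme_{ab}/p$ in probability with both bounded away from $0$ and $\infty$ by Conditions~\ref{con1} and~\ref{con_lmin_lower}, so the normalizing inner products behave like deterministic quantities of order $p$.

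For the bias term I would use Condition~\ref{sigma_con} directly: $|\bme_{ab}^T\bm{\varepsilon}_{ab}| = o(p^{1/2})$, while $\bme_{ab}^T\bme_{ab} \asymp p$ and the lower bound of $\zeta_{\alpha\beta}(\bme_{ab})$ from Theorem~\ref{thm_covariance_dep} is, after plugging in Condition~\ref{con_coneff_lower} (which forces $\|\bmalpha\|\|\bmbeta\| \gtrsim \sqrt{n_\alpha n_\beta}$) and the $\sigma^2_{\epsilon}\asymp\|\cdot\|^2$ relations from Condition~\ref{con1}(c), bounded below by a constant times $p^{-1/2}\|\w_{ab}\|/\sqrt{p}$ or similar; combining these gives that \eqref{step1_a}-analogue is $o(1)$. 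Simultaneously the upper bound in Theorem~\ref{thm_covariance_dep}, together with Condition~\ref{con_n_rate} ($\|\bmalpha\|\|\bmbeta\| = o(p^{3/2})$), yields $\zeta_{\alpha\beta}(\bme_{ab}) \to 0$. For the term \eqref{step2_a}-analogue I would split it using the identity $(\widehat{\bme}_{ab}^T\widehat{\bme}_{ab})^{-1}\widehat{\bme}_{ab}^T - (\bme_{ab}^T\bme_{ab})^{-1}\bme_{ab}^T = (\widehat{\bme}_{ab}^T\widehat{\bme}_{ab})^{-1}(\widehat{\bme}_{ab}-\bme_{ab})^T + \{(\widehat{\bme}_{ab}^T\widehat{\bme}_{ab})^{-1} - (\bme_{ab}^T\bme_{ab})^{-1}\}\bme_{ab}^T$, bound $(\widehat{\bme}_{ab}-\bme_{ab})^T\w_{ab}$ using Theorem~\ref{lTw_normality} (which gives it is $O_p(\rho_{ab}) = O_p(\sqrt{(n_{r\alpha}^{-1}+n_{r\beta}^{-1})}\|\w_{ab}\|)$ plus a bias $O(\|\w_{ab}\|_1(n_{r\alpha}^{-1}+n_{r\beta}^{-1}))$ from Lemma~\ref{lTl_normality}-type estimates), and control $\widehat{\bme}_{ab}^T\widehat{\bme}_{ab} - \bme_{ab}^T\bme_{ab} = O_p(\rho_{l_{ab}}) = O_p(\sqrt{p/n_{r\alpha}+p/n_{r\beta}})$ via Lemma~\ref{lTl_normality}; dividing by $\zeta_{\alpha\beta}(\bme_{ab})$ and invoking Conditions~\ref{con_margeff} and~\ref{con_nr_rate_2} shows this is $o_p(1)$.

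The main term \eqref{step3_a}-analogue is where the real work lies. Since $\widehat{\w}_{ab}-\w_{ab}$ is a centered quadratic-type functional of the GWAS samples $\y_\alpha,\y_\beta$, I would first replace $\widehat{\bme}_{ab}$ by $\bme_{ab}$ in the coefficient vector (justified because $\widehat{\bme}_{ab}$ is independent of the GWAS data and concentrates around $\bme_{ab}$, with the replacement error controlled exactly as in the \eqref{step2_a}-analogue and shown to be asymptotically negligible relative to $\zeta_{\alpha\beta}$ using the covariance structure of $\widehat{\w}_{ab}-\w_{ab}$), reducing the problem to the asymptotic normality of $\bme_{ab}^T(\widehat{\w}_{ab}-\w_{ab})/(\text{normalizer})$, which is precisely governed by Theorem~\ref{thm_covariance_dep} together with Remark~\ref{thm_covariance_dep_remark} (applied with the generic bounded vector being $\bme_{ab}$, whose entries are bounded above and below by Conditions~\ref{con1} and~\ref{con_lmin_lower}). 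Condition~\ref{con_pn_relation} and Condition~\ref{con_condeff_lower} are exactly what Theorem~\ref{thm_covariance_dep} requires. I anticipate the hard part being the careful bookkeeping in the $\widehat{\bme}_{ab}\to\bme_{ab}$ replacement step inside the main term: one must show that $(\widehat{\bme}_{ab}-\bme_{ab})^T(\widehat{\w}_{ab}-\w_{ab})$, a product of two independent mean-zero fluctuations of different natures (one from the reference panels, one from the GWAS samples), is of smaller order than $\zeta_{\alpha\beta}(\bme_{ab})\cdot\bme_{ab}^T\bme_{ab}$ — this requires a second-moment bound combining $\rho_{l_{ab}}$-scale control of $\widehat{\bme}_{ab}-\bme_{ab}$ with the variance of $\widehat{\w}_{ab}-\w_{ab}$ from the proof of Theorem~\ref{thm_covariance_dep}, and it is here that Conditions~\ref{con_margeff} and~\ref{con_nr_rate_2} are used in an essential way. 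The remaining pieces then assemble via Slutsky's theorem, and the detailed computations would be deferred to the Supplementary Material.
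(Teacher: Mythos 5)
Your proposal is correct and follows essentially the same route as the paper's proof: the identical three-term decomposition of $(\widehat{\sigma}_{\alpha\beta} - \sigma_{\alpha\beta})/\zeta_{\alpha\beta}(\bme_{ab})$, with the bias term handled via Conditions~\ref{sigma_con} and~\ref{con_coneff_lower}, $\zeta_{\alpha\beta}(\bme_{ab})\to 0$ via Condition~\ref{con_n_rate}, the LD-score estimation-error term controlled through Theorem~\ref{lTw_normality} (with Conditions~\ref{con_margeff} and~\ref{con_nr_rate_2}), and the main term's asymptotic normality obtained from Theorem~\ref{thm_covariance_dep} together with Remark~\ref{thm_covariance_dep_remark}. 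The only difference is cosmetic bookkeeping in the main term (replacing $\widehat{\bme}_{ab}$ by $\bme_{ab}$ and bounding the cross fluctuation versus invoking the remark directly for the estimated LD scores), which does not change the argument.
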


\begin{proof}[Proof sketch of Theorem~\ref{thm_clt_cov}]
    We rewrite the standardized $\widehat{\sigma}_{\alpha\beta}$ as follows:
\begin{align*}
\frac{\widehat{\sigma}_{\alpha\beta} - \sigma_{\alpha\beta}}{\zeta_{\alpha\beta} (\bme_{ab})} 
=& \frac{(\widehat{\bme}_{ab}^T\widehat{\bme}_{ab})^{-1}\widehat{\bme}_{ab}^T\widehat{\w}_{ab} -  (\bme_{ab}^T\bme_{ab})^{-1}\bme_{ab}^T\w_{ab}}{\zeta_{\alpha\beta} (\bme_{ab})}  
+ \frac{(\bme_{ab}^T\bme_{ab})^{-1}\bme_{ab}^T \bm{\varepsilon}_{ab}}{\zeta_{\alpha\beta} (\bme_{ab})} \\
=& \frac{(\widehat{\bme}_{ab}^T\widehat{\bme}_{ab})^{-1}\widehat{\bme}_{ab}^T (\widehat{\w}_{ab} -\w_{ab})}{\zeta_{\alpha\beta} (\bme_{ab})}   
+ \frac{\{(\widehat{\bme}_{ab}^T\widehat{\bme}_{ab})^{-1}\widehat{\bme}_{ab}^T\ -  (\bme_{ab}^T\bme_{ab})^{-1}\bme_{ab}^T\} \w_{ab}}{\zeta_{\alpha\beta} (\bme_{ab})} \\ 
&+ \frac{(\bme_{ab}^T\bme_{ab})^{-1}\bme_{ab}^T \bm{\varepsilon}_{ab}}{\zeta_{\alpha\beta} (\bme_{ab})}. 
\end{align*}

Then it suffices to show that
\begin{align}\label{step1}
\frac{(\bme_{ab}^T\bme_{ab})^{-1}\bme_{ab}^T \bm{\varepsilon}_{ab}}{\zeta_{\alpha\beta} (\bme_{ab})} \to 0,
\end{align}
\begin{align}\label{step2}
\frac{\{(\widehat{\bme}_{ab}^T\widehat{\bme}_{ab})^{-1}\widehat{\bme}_{ab}^T\ -  (\bme_{ab}^T\bme_{ab})^{-1}\bme_{ab}^T\} \w_{ab}}{\zeta_{\alpha\beta} (\bme_{ab})} \overset{p}{\to} 0,
\end{align}
and 
\begin{align}\label{step3}
\frac{(\widehat{\bme}_{ab}^T\widehat{\bme}_{ab})^{-1}\widehat{\bme}_{ab}^T (\widehat{\w}_{ab} -\w_{ab})}{\zeta_{\alpha\beta} (\bme_{ab})}  \overset{d}{\to} N(0,1).
\end{align}

The lower and upper bounds of $\zeta_{\alpha\beta} (\bme_{ab})$ are both increasing functions of the  aggregated genetic effects $\|\bmalpha\|$ and $\|\bmbeta\|$.
Thus, to obtain  \eqref{step1},  we need an upper bound of $|\bme_{ab}^T \bm{\varepsilon}_{ab}|$ and a lower bound of  $\|\bmalpha\|\|\bmbeta\|$, which follow from Conditions \ref{sigma_con} and \ref{con_coneff_lower}, {\bxzzz respectively}.
Moreover, Condition \ref{con_n_rate} controls the upper bound of $\|\bmalpha\|\|\bmbeta\|$ to ensure $\zeta_{\alpha\beta} (\bme_{ab})\to 0$ and thus the convergence of $\widehat{\sigma}_{\alpha\beta}$ to $\sigma_{\alpha\beta}$.
{\bxzzz In addition,}
\eqref{step2} and \eqref{step3} can be derived from Theorems \ref{lTw_normality} and \ref{thm_covariance_dep}, respectively.
\end{proof}

In current within-ancestry applications, LDSC allows the data of the two traits to come from two different GWAS cohorts in the same population (e.g., {\bxzzz 1KG-EUR-like individuals}) and permits potential sample overlapping. This enables convenient estimation of genetic variance for two different traits using their publicly available GWAS summary statistics.
In our theoretical analysis, we demonstrate that LDSC can be further extended to cross-ancestry applications (e.g., {\bxzzz between 1KG-EUR-like and 1KG-EAS-like individuals}). 
Studying the differences and similarities in the genetic architecture underlying complex traits across global populations is of great interest \citep{peterson2019genome}. For example, it has been reported that the genetic correlation of schizophrenia is $0.98$ between {\bxzzz EAS and EUR genetic ancestry} cohorts \citep{lam2019comparative}, while the genetic correlation of major depressive disorder is only $0.41$ \citep{giannakopoulou2021genetic}.

Cross-ancestry LDSC also provides insights into the genetic heterogeneity across ancestries for different traits.
Curated genetic data resources, such as the
Pan-UKBB (\url{https://pan.ukbb.broadinstitute.org/}) and the global biobank meta-analysis initiative \citep{zhou2022global}, have been recently developed to promote multi-ancestry genetic research.
Using these data resources, it is possible to develop the infrastructure (such as performing QCs and calculating cross-ancestry LD scores) for cross-ancestry LDSC and share them with the community in the near future. In our {\bxzzz numerical} analysis, we have also tested cross-ancestry LDSC using simulated data.

Additionally, we establish the consistency of $\widehat{\sigma}_{\alpha\beta}$ 
{\fxintercept with weaker conditions in Section S2 of the Supplementary Material.}



In addition to LDSC, other methods have been proposed to estimate genetic variances and/or genetic covariances, particularly moment methods \citep{dicker2014variance,wang2022estimation,lu2017powerful,ma2019mahalanobis,schwartzman2019simple,zhou2017unified}. 
To help understand their connections, we provide a technical comparison between the LDSC and other existing methods in Section S1.1 of the Supplementary Material (Supplementary Table 1). Briefly, although these methods all can accept summary statistics as input, there are differences due to the nature of the proposed estimators and modeling frameworks. These differences include whether the methods and derived theoretical properties require consistent estimators of high-dimensional covariance/precision matrices, whether they model the randomness associated with using reference panels, and whether they consider bivariate and cross-ancestry applications. Notably, a key difference of the LDSC method is its use of LD scores, which take advantage of the block patterns of the LD structure and offer both good computational efficiency and theoretical properties.

Our framework is one of the first to explicitly model LD scores and their estimation from external reference panels, which is a common practical approach in genetic variance and covariance analyses. Previously established frameworks \citep{dicker2014variance,wang2022estimation} cannot be directly applied to study the LDSC estimators. We provide theoretical justifications for both univariate and bivariate applications, as well as cross-ancestry extensions, demonstrating that a relatively small sample size is sufficient for the reference panel in the practical estimation of LD scores.

\section{Numerical results}\label{sec5}

\subsection{Simulated genotype data}\label{sec5.1}
We examine the numerical performance of the {\bxzz univariate and bivariate LDSC estimators} using simulated genotype data. We set $n_{\alpha}=n_{\beta}=2,\!000$ and $p=16,\!000$. 
The MAF $f$ of these genetic variants are independently generated from Uniform $[0.05, 0.45]$. 
Each entry of $\X_{\alpha_0}$ and $\X_{\beta_0}$ is independently sampled from $\{0,1,2\}$ with probabilities $\{(1-f)^2,2f(1-f),f^2\}$, respectively. 
The genome LD patterns are mimicked by constructing $\bmSigma_{\alpha}$ and $\bmSigma_{\beta}$ as block-diagonal matrices, each with $8$ large blocks.
The correlations among the $2,\!000$ genetic variants in each of the $8$ blocks are estimated from one region on chromosome $1$ (bp $40$-$50$m) with the {\bxzzz 1KG} reference panel \citep{10002015global}, and the genetic variants from different blocks are independent. We set $n_{r\alpha}=n_{r\alpha}=2,\!000$. 
We consider both within-ancestry and cross-ancestry genetic correlation estimations. 
In within-ancestry cases, we estimate $\bmSigma_{\alpha}=\bmSigma_{\beta}$ from European samples. In cross-ancestry cases, the $\bmSigma_{\alpha}$ and $ \bmSigma_{\beta}$ are estimated from European and East Asian subjects, respectively. 

The complex traits $\y_{\alpha}$ and $\y_{\beta}$ are simulated based on model~(\ref{equ2.1}). There is no sample overlap between the two traits. 
Let the SNP-based heritability \citep{yang2017concepts} of $\y_{\alpha}$ be $\h^2_{\alpha}=\bmalpha^T\bmalpha/(\bmalpha^T\bmalpha+\sigma^2_{\epsilon_\alpha})$ and that of $\y_{\beta}$ be $\h^2_{\beta}=\bmbeta^T\bmbeta/(\bmbeta^T\bmbeta+\sigma^2_{\epsilon_\beta})$.
In order to reflect low- and high-levels of heritability, 
we set $\h^2_{\alpha}=\h^2_{\beta}=$ $0.1$, $0.3$, $0.5$, or $0.8$. 
Similarly, the true genetic correlation $\varphi_{\alpha\beta}$ ranges from $0.1$, $0.3$, $0.5$, to $0.8$. 
The nonzero genetic effects in $\bmbeta$ and $\bmalpha$ are independently sampled from the normal distribution $N(0,1/p)$.
Let the number of nonzero elements in $\bmalpha$ and $\bmbeta$ be $m_{\alpha}$ and $m_{\beta}$, respectively. 
In order to evaluate sparse and dense genetic signals, we vary the proportion of genetic variants with nonzero effects $m_{\alpha}/p=m_{\beta}/p$ from $0.01$, $0.1$, $0.2$, $0.5$,  $0.8$, to $1$. 
We let $m_{\alpha\beta}/m_{\alpha}=m_{\alpha\beta}/m_{\beta}=0.5$ or $1$, where $m_{\alpha\beta}$ is the number of genetic variants that have nonzero effects on both  traits.
We evaluate the bivariate LDSC estimator $\widehat{\sigma}_{\alpha\beta}$
and  univariate LDSC estimators
$\widehat{\sigma}_{\alpha}^2$ and $\widehat{\sigma}_{\beta}^2$.
Each situation is replicated $500$ times. 

As expected, bivariate LDSC  estimator $\widehat{\sigma}_{\alpha\beta}$ is close to the underlying true values in both within-ancestry (Supplementary Figures 1--2) and cross-ancestry estimations (Supplementary Figures 3--4).  
Furthermore, we consistently find that cases with small sparsity (for example, $m_{\alpha}/p=m_{\beta}/p=0.01$) tend to have larger variations and more outliers. 
In addition, we observe that the results with $m_{\alpha\beta}/m_{\alpha}=m_{\alpha\beta}/m_{\beta}=0.5$ (Supplementary Figures 2 and 4) may have larger variations than those with $m_{\alpha\beta}/m_{\alpha}=m_{\alpha\beta}/m_{\beta}=1$ (Supplementary Figures 1 and 3). 
These results indicate that the variance of  $\widehat{\sigma}_{\alpha\beta}$ increases as $m_{\alpha}, m_{\beta}$, and $m_{\alpha\beta}$ decrease. LDSC has higher power for complex traits with greater polygenicity, where the number of common genetic variants influencing the phenotype is larger.
In Supplementary Figures 5 and 6, similar phenomena are observed for univariate LDSC estimators
$\widehat{\sigma}_{\alpha}^2$ and $\widehat{\sigma}_{\beta}^2$.
This numerical finding aligns with our theoretical results and discussions presented in Sections~\ref{sec2} and \ref{sec3}, where we need a large number of genetic variants with non-zero effects.
In summary, our results suggest that LDSC estimators can be applied to various combinations of heritability, genetic correlation, and genetic signal sparsity. Additionally, we demonstrate that the performance may be heavily influenced by genetic signal sparsity. LDSC may perform better for traits with a larger number of casual genetic variants. 

\subsection{UK Biobank data analysis}\label{sec5.2}
We further evaluate the LDSC estimators using real genetic variant data from the UK Biobank study  \citep{bycroft2018uk}. 
We download the UKB imputed genotype data (version 3) and apply standard quality control steps, including 
1) excluding subjects with more than $10\%$ missing genotypes; 
2) removing genetic variants with MAF $\le 0.01$, with genotyping rate $\le 90\%$, or with Hardy-Weinberg test $p$-value $\le 1\times 10^{-7}$; and 
3) removing the variants with imputation INFO score less than $0.8$.
Following the original LDSC papers \citep{bulik2015ld,bulik2015atlas}, we further select the genetic variants in the HapMap3 reference panel \citep{international2010integrating} and remove the variants in the major histocompatibility complex region. After these quality control steps, there are $990,\!761$ genetic variants over $484,\!114$ subjects. 

To run LDSC with the developed software tool and {\bxzzz 1KG-EUR} LD scores at \url{https://github.com/bulik/ldsc}, we focus on within-ancestry analysis, in which we use the unrelated White British subjects ($n=366,\!335$). 
We consider three levels of sample size: $n_{\alpha}+n_{\beta}={\fxintercept 350,\!000}${\fxintercept, $50,\!000$, or $10,\!000$}. 
We randomly select these subjects and set $n_{\alpha}=n_{\beta}$ in a balanced-sample scenario and $n_{\alpha}=9\times n_{\beta}$ in an unbalanced-sample scenario. 
In each case, we consider three levels of sample overlap: 1) no overlap ($0\%$);
2) half of the $n_{\beta}$ subjects overlap with the $n_{\alpha}$ subjects  ($50\%$); 
and 3) all the $n_{\beta}$ subjects overlap with the $n_{\alpha}$ subjects ($100\%$). 
{\bxzz Let the genetic correlation between $\y_{\alpha}$ and $\y_{\beta}$ be $\varphi_{\alpha\beta}=g_{\alpha\beta}/(g_{\alpha}g_{\beta})$,} we set $\h_{\alpha}^2=\h_{\beta}^2=0.6$ and $\varphi_{\alpha\beta}=0.5$; or $\h_{\alpha}^2=\h_{\beta}^2=0.3$ and $\varphi_{\alpha\beta}=0.25$. The proportion of genetic variants with nonzero genetic effects $m_{\alpha}/p=m_{\beta}/p$ ranges from $0.001$, $0.01$, $0.1$, to $0.5$.  
We randomly select these  variants and their effects are independently generated from $N(0,1/p)$ using the GCTA \citep{yang2011gcta}. We obtain GWAS summary statistics by using fastGWA \citep{jiang2019resource}.  
We replicate each simulation setting $200$ times.

\begin{figure}[t!]
\includegraphics[page=1,width=0.8\linewidth]{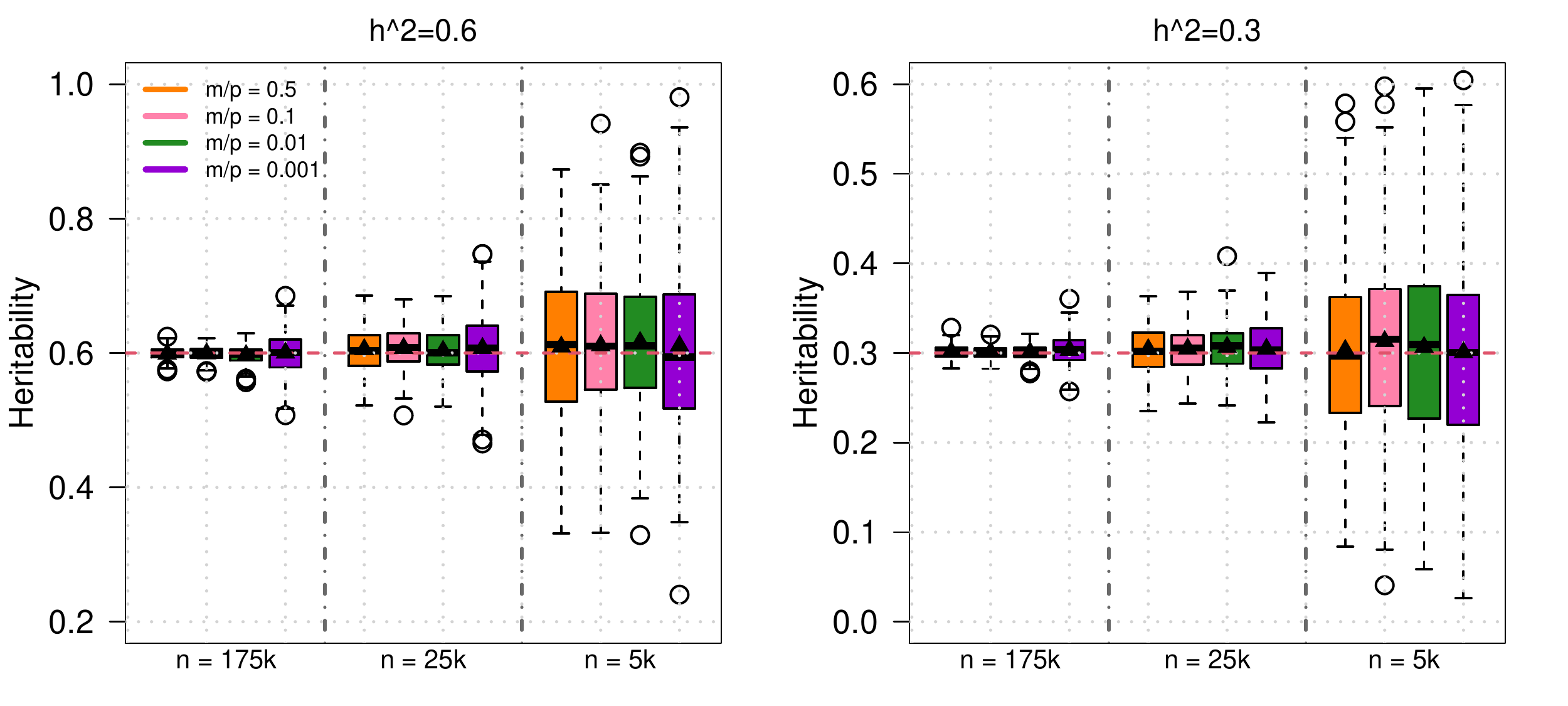}
  \caption{
  Univariate LDSC estimator across different sample sizes ($n_{\alpha}$), signal sparsity ($m/p$), and heritability level in the UK Biobank data simulation. We report the results of heritability, which is closely related to genetic variance. We set the heritability $h^2_{\alpha}=0.6$ and $0.3$ in the left and right panels, respectively. 
We simulate the data with $n_{\alpha}=175,000$, $25,000$, or $5,000$.  
The horizontal line represents the true heritability.
}
\label{fig2}
\end{figure}

The results of {\bxzz univariate LDSC} are displayed in Figure~\ref{fig2}. 
We find  {\bxzz that the  univariate  LDSC estimators} are consistently  close to the underlying true values in different settings. The variance of estimators increases as the sample size and sparsity decrease. 
Similar patterns are demenstrated in Figure~\ref{fig3} and Supplementary Figures 7--8 for bivariate LDSC estimators.

There are some important observations. 
First, the variance of the LDSC estimators heavily depends on the sum of the two sample sizes ($n_{\alpha}+n_{\beta}$) and the genetic signal sparsity ($m_{\alpha}/p$ and $m_{\beta}/p$). When $n_{\alpha}+n_{\beta}={\fxintercept350,\!000}$, the variance of estimators in the $m_{\alpha}/p=m_{\beta}/p=0.001$ cases is much larger than those of denser genetic signals.  
As $n_{\alpha}+n_{\beta}$ decreases to $50,\!000$ or $10,\!000$, the variance level moves up and LDSC estimates become much noisier.

{\bxzzz We further assess the asymptotic normality of LDSC estimates, presented in Supplementary Figures 9--10, which include histograms and quantile-quantile plots across different sample sizes and sparsity levels. According to the Shapiro-Wilk test \citep{shapiro1965analysis}, the normality assumption for LDSC estimators generally holds, except in cases of small sample sizes with sparse genetic signals. These findings align with our theoretical analyses in Theorems \ref{thm_clt_var} and \ref{thm_clt_cov} for univariate and bivariate LDSC estimators, respectively, and also indicate that LDSC may require larger GWAS sample sizes for traits with sparser genetic signals to yield stable estimators. 
Additionally, we directly evaluate the asymptotic normality of the LD scores-related statistics studied in Lemmas \ref{lTl_normality_a} and \ref{lTl_normality}. 
The LD scores estimated from the 1KG reference panel provide numerical support for our theoretical analyses (Supplementary Figure 11).}

Second, LDSC has consistent performance across different levels of sample overlapping as shown in Figure~\ref{fig3} and Supplementary Figure 12. This is an attractive feature because in practice many traits may have partially or fully overlapped samples. 
In addition, the performance is also largely stable in balanced-sample and inbalanced-sample scenario. For example, $n_{\alpha}=n_{\beta}=175,\!000$ and $n_{\alpha}=315,\!000$ and $n_{\beta}=35,\!000$ have similar performance, indicating that LDSC can benefit from one trait with large sample size and allow the second trait has relatively small sample size. 

In summary, our GWAS data simulation results show that LDSC-based estimators have stable performance when there are sample overlaps. Overall, we need large sample sizes in GWAS, especially when the genetic signal is sparse. The authors of the LDSC software recommend that the minimum sample size to apply this method is about $5,\!000$ (\url{https://github.com/bulik/ldsc/wiki/FAQ}). 
{\bxzzz We discuss potential future directions to improve the performance for genetic variance and covariance estimations in the Discussion Section.}

\begin{figure}[t!]
\includegraphics[page=1,width=0.8\linewidth]{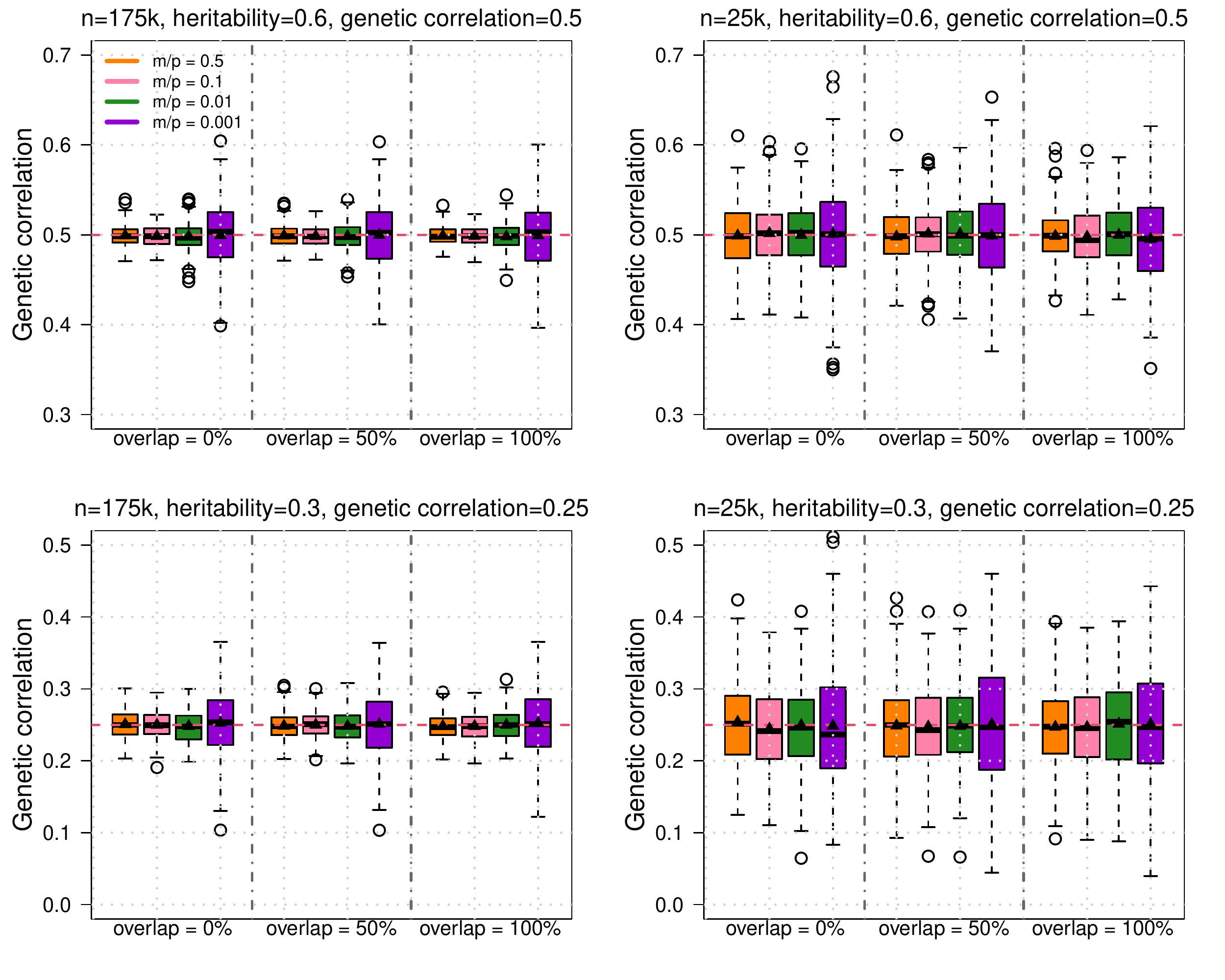}
  \caption{Bivariate LDSC estimator across different sample sizes ($n_{\alpha}$ and $n_{\beta}$), signal sparsity ($m/p$), and sample overlaps. 
  We report the genetic correlation, which is closely related to both genetic covariance and variance. 
  We set the heritability $h^2_{\alpha}=h^2_{\beta}=0.6$ and genetic correlation $\varphi_{\alpha\beta}=0.5$ in the top panels and 
 $h^2_{\alpha}=h^2_{\beta}=0.3$ and $\varphi_{\alpha\beta}=0.25$ in the bottom panels. 
We simulate the data with $n_{\alpha}=n_{\beta}=175,\!000$ or $25,\!000$. 
In each panel, we consider three cases of sample overlaps: 1) no sample overlap ($0\%$),
2) half sample overlap ($50\%$), and 3) all samples overlap ($100\%$). 
The horizontal line represents the true genetic correlations.
}
\label{fig3}
\end{figure}
{\bxzzz
\subsection{HAPNEST data analysis}\label{sec5.3}
To provide more insights into the extension of LDSC for cross-ancestry applications and the determination of LD blocks in such contexts, we conduct analyses using 1KG-EUR-like and 1KG-EAS-like genetic data generated by HAPNEST \citep{wharrie2023hapnest}.
The sample sizes are set at either $n_{\alpha}=n_{\beta}=168,\!000$ or $16,\!800$. We set $3\%$ of genetic variants to have nonzero genetic effects ($m_{\alpha}/p=m_{\beta}/p=0.03$), with a heritability of $h_{\alpha}^2=\h_{\beta}^2=0.5$, and genetic correlations set at $\varphi_{\alpha\beta}=0.5$ or $1$. 
To explore scenarios with different heritability values between two populations, we consider cases where heritability differs by $0.1$ ($h_{\alpha}^2=0.55$ and $\h_{\beta}^2=0.45$) 
and $0.4$ ($h_{\alpha}^2=0.7$ and $\h_{\beta}^2=0.3$). 
Other settings are the same as in Section~\ref{sec5.2}. 
Together, these result in $12$ different simulation setups and we estimate both heritability and genetic correlation in $200$ replications for each setup.

We evaluate different approaches for estimating LD scores from the 1KG project \citep{10002015global}: (i) Typical LD scores computed with the default window size (1 cM) using a pooled sample of 1KG-EUR and 1KG-EAS individuals and thus ignore the population differences ("Pooled subjects"); 
(ii) Cross-ancestry LD scores $\widehat{\bme}_{ab}$ estimated based on 1KG-EUR and 1KG-EAS subjects using 1MB-length blocks ("1MB-Window"); 
(iii) Cross-ancestry LD scores $\widehat{\bme}_{ab}$ estimated based on 1KG-EUR and 1KG-EAS subjects using 2MB-length blocks ("2MB-Window"); and 
(iv) Cross-ancestry LD scores $\widehat{\bme}_{ab}$ estimated based on 1KG-EUR and 1KG-EAS subjects using $1,368$ jointly inferred independent blocks with an average length of 2MB \citep{shi2020localizing} ("Independent"). 

We have several interesting findings (Supplementary Figures 13--14). 
First, both the "1MB-Window" and "2MB-Window" cross-ancestry LD score estimates provide unbiased results in all settings for both heritability and genetic correlation analyses.
These indicate that the window size-based LD score estimation strategy proposed by \cite{bulik2015ld} remains robust in cross-ancestry analyses. 
The inferred independent blocks \citep{shi2020localizing} result in a small bias, which becomes larger (yet still relatively minor) as the sample size increases.

In addition, consistent with expectations, we observe that pooled LD scores do not perform well in either heritability or genetic correlation analyses, regardless of whether the analysis uses meta-analysing GWAS summary statistics of two populations. 
For example, when the genetic correlation is set at $\varphi_{\alpha\beta}=1$ (bottom panels of Supplementary Figures 13--14), LD scores calculated from pooled samples lead to an underestimation of heritability in both EUR and EAS GWAS summary statistics. 
Using meta-analyzed GWAS summary statistics results in heritability estimates that fall between the two population-specific estimates, and they are still biased. 
When the genetic correlation is set at $\varphi_{\alpha\beta}=0.5$ (upper panels of Supplementary Figures 13--14), performing a meta-analysis may result in lower heritability estimates, which may reflect the partially overlapping genetic architecture between the traits in the two populations. 
Moreover, using pooled LD scores consistently leads to underestimated genetic correlation. These results suggest that overlooking population differences in LD score estimation and subsequent LDSC analyses may result in substantially biased results.

Therefore, the best practice for EUR-EAS LDSC genetic correlation analysis may be using fixed window sizes of either 1MB or 2MB to generate cross-ancestry LD scores $\widehat{\bme}_{ab}$. 
In Section S1.6 of the Supplementary Materials, we have also conducted comparisons of these methods across $34$ pairs of matched phenotypes from Biobank Japan \citep{sakaue2021cross} and UK Biobank \citep{bycroft2018uk}, which reveals similar patterns and suggest the robust performance of window size-based cross-ancestry LD scores in real data (Supplementary Table 2 and Supplementary Figure 15).}

\section{Discussion}\label{sec6}
LDSC, which requires only GWAS summary data and the estimated LD scores from reference panels, has been widely applied in genetics and genomics.
In this paper, we propose a fixed-effect framework to investigate the theoretical properties of LDSC-based estimators. Our model setup incorporates a data integration approach to account for the utilization of GWAS and reference panel data from diverse data sources. 

Notably, our approach not only explicitly {\bxzzz models} the block-diagonal relatedness patterns in the estimated LD scores, but also addresses the high-dimensional dependence structure in GWAS summary data. Within this flexible modeling framework, we establish the asymptotic normality of both univariate and bivariate LDSC estimators. Furthermore, we extend our analysis to cross-ancestry applications, which are anticipated to play a significant role in future human genetic research \citep{zhou2022global}. 
{\bxzzz It is worth mentioning that the same trait may have different heritability across various cohorts and populations. Consequently, $h_{\alpha}^2$ and $h_{\beta}^2$ may differ in practice, especially in cross-ancestry analyses. Our theoretical framework for bivariate LDSC  permits the estimation of genetic correlation
between two different traits. 
Therefore, it is not necessary for $h_{\alpha}^2$ and $h_{\beta}^2$ to be identical to have an unbiased estimator of the genetic correlation.
Furthermore, 
given} the prevalence of reference panel-based methods in genetics and genomics \citep{hu2019statistical,xue2023causal}, our asymptotic normality and data integration frameworks hold broad applicability for studying the theoretical properties of various GWAS summary data-based methods.

It has been observed that, although easy to use, the LDSC estimators may only capture partial information about the LD matrix, leading to a loss in estimation precision. As a result, improving the performance of LDSC estimators has been an active research area in statistical genetics \citep{ning2020high,song2022leveraging,Smith2022.07.21.501001}.
{\bxzzz For example, HDL \citep{ning2020high} assumes genetic effects follow a high-dimensional normal distribution and uses maximum likelihood estimation to obtain genetic correlation parameters. This method leverages the entire LD matrix and the covariance matrix of GWAS summary statistics, not just LD scores as the LDSC does. 
In practical implementations, it relies on a banded eigen-decomposition procedure for computational feasibility.

Additionally, we consider the definitions of genetic variance and covariance as proposed by LDSC \citep{bulik2015ld} and GREML \citep{yang2011gcta}. 
While these definitions have been broadly used, they could be expanded in two directions to better model the intricate genetic architecture. First, we can include indirect genetic effects \citep{wang2022estimation}, and second, we can incorporate frequency-dependent relationships to reflect the impact of negative selection \citep{schoech2019quantification,momin2023method,speed2019sumher}. Briefly, the current definitions predominantly account for direct genetic effects from genetic variants, neglecting indirect genetic effects induced by the LD structure. Moreover, if complex traits' genetic architectures are influenced by negative selection, LDSC might yield biased estimates, especially for heritability estimation (Supplementary Figure 16). 
It is of great interest to extend our investigations to improve the performance of LDSC and understand its behavior under alternative model assumptions in future research. 
We provide more details in Section S1.5 of the Supplementary Material. 
}

In our simulations, we also observe that LDSC may exhibit larger variance when the genetic signals are sparse. 
It would be interesting to {\bxzzz develop methods to improve the efficiency of genetic variance and covariance estimation} for traits with such genetic architecture. 
It is notable that directly selecting a subset of genetic variants with nonzero genetic effects may lead to even worse performance.
Considering the same setups in our UK Biobank simulation in Section~\ref{sec5.2}, we examine a second version of LDSC which only uses the genetic variants known to have a nonzero genetic effect. 
Supplementary {\bxzzz Figures 12 and 17} show that this version has an even larger variance in both heritability and genetic correlation estimations. This may be due to the fact that in the LDSC framework, the slope is estimated across the $p$ genetic variants. Naively selecting a subset of genetic variants may indeed reduce the "sample size" of the LDSC regression. 

{\bxzzz 
Therefore, for traits whose genetic architecture is estimated to be relatively sparse (which can be assessed, for example, using LD4M \citep{o2019extreme}, GENESIS \citep{zhang2018estimation}, or FMR \citep{o2021distribution}), LDSC-type regression-based estimators may not be efficient.  
For such traits, alternative ideas may be evaluated in future studies to improve efficiency. For example, methods developed with sparsity restrictions, such as FDEs \citep{guo2019optimal}, EstHer \citep{bonnet2018improving}, and CHIVE \citep{tony2020semisupervised}, could be extended to work with summary statistics and reference panels, potentially performing better for traits with highly sparse genetic signals.
These methods may remain effective for sparse traits even with low genome-wide heritability, as the smaller number of contributing genetic variants implies a stronger per-variant genetic signal. 
In addition, while methods based on summary data are convenient and enjoy the advantage of increasing GWAS sample sizes, they typically have lower statistical power compared to methods that work directly with individual-level data. 
Thus, for traits with low heritability that have not yet accumulated large GWAS sample sizes, methods based on individual-level data, such as GREML \citep{yang2011gcta,momin2023method}, may be preferred.
Furthermore, cross-trait polygenic risk score-based estimators \citep{power2015polygenic,zhao2022genetic} may be more effective for both sparse signal traits and low heritability traits. Many polygenic risk score methods, such as Lassosum \citep{mak2017polygenic} and PRSCS \citep{ge2019polygenic}, have been developed to handle flexible genetic architectures. These methods typically require individual-level data from one of the two traits, potentially resulting in smaller variance and better performance for low-heritability traits due to the use of individual-level data.
}

Moreover, our simulation results indicate that there might be increased uncertainty in LDSC estimates when applied to a small subset of genetic variants, {\bxzzz such as} in stratified heritability enrichment analysis \citep{finucane2015partitioning,gazal2017linkage,finucane2018heritability}.
{\bxzzz  Intuitively, this is due to the fact that the number of genetic variants serve as the "sample size" in the LDSC framework.}
The conditions required for the asymptotic normality of LDSC estimators, as discussed in Sections~\ref{sec2} and \ref{sec3}, might be more challenging to satisfy when the number of genetic variants is substantially reduced, {\bxzzz leading to compromised reliability and accuracy of the estimates.}
These theoretical results are consistent with numerical observations reported in the genetic literature. Particularly, \cite{tashman2021significance} highlightes potential issues in stratified heritability enrichment testing when annotations are small. 
They note that stratified LDSC analyses are generally suitable for large annotations encompassing over $0.5\%$ of genome-wide genetic variants. 

We examine a series of existing annotations and find that they may have varying sizes and the majority has over $0.5\%$ of genetic variants, which may suggest that our developed theoretical results are applicable to many real functional annotations (Supplementary Figure 18). However, smaller annotations also exist, such as those more finely defined by brain cell subtypes \citep{hauberg2020common}, and are expected to become more prevalent with emerging cell-type specific data resources that provide fine-grained details in functional genomics.  Additionally, a typical approach in stratified heritability analysis is to compare the statistical significance across many annotations and report the top-ranking ones. This approach may be more suitable for functional annotations with comparable annotation sizes; otherwise, the size may confound the ranks as they serve as the "sample size" in LDSC. There is a need to develop new methods that work better for small annotations, such as gene-based and annotation-free approaches \citep{zhang2022polygenic}. Further details and suggestions are discussed in Section S1.4 of the Supplementary Material.

\section*{Acknowledgements}
The authors would like to thank the anonymous referees, the Associate Editor, and the Editor for their constructive comments, which significantly improved the quality of this paper.
We are also grateful to Hongyu Zhao, Jingyi Jessica Li, Jiwei Zhao, and Boran Gao for their insightful conversations and valuable suggestions.
The study has been partially supported by NSF Grant DMS 2210860 and start-up funds from Purdue Statistics Department. 
{\bxzzz Research reported in this publication is also supported by the National Institute of Mental Health under Award Number R01MH136055 and National Institute on Aging under Award Number RF1AG082938. The content is solely the responsibility of the authors and does not necessarily represent the official views of the National Institutes of Health.} 
This research has been conducted using the UK Biobank resource (application number 76139), subject to a data transfer agreement. We would like to thank the individuals and the research teams in the UK Biobank and Biobank Japan. We would like to thank the research computing groups at Purdue University and the Wharton School of the University of Pennsylvania for providing computational resources that have contributed to these research results. 

 \begin{supplement}
 {\bf Supplement to ``High-dimensional statistical inference for linkage disequilibrium score regression and its cross-ancestry extensions``.}
 Due to space constraints, {\bxzzz additional results and proofs} are deferred to the supplement.
 \end{supplement}

\bibliographystyle{imsart-number} 
\bibliography{sample.bib}       





\end{document}